%% file: main.tex
\documentclass[sigplan,nonacm]{acmart}
\hypersetup{keeppdfinfo}

\AtBeginDocument{%
  }

\usepackage{amsthm}
\usepackage{enumitem}
\usepackage{nicefrac}
\usepackage{algorithm}
\usepackage{xcolor}
\usepackage{multirow}
\usepackage[shortcuts]{extdash}
\usepackage{flushend}

\setcopyright{none}
\renewcommand\footnotetextcopyrightpermission[1]{}
\newtheorem{theorem}{Theorem}
\newtheorem{proposition}{Proposition}

\theoremstyle{definition}
\newtheorem{definition}{Definition}

\begin{document}

\title[COMPASS-ABS]{COMPASS-ABS: Reducing Fragmentation in Shared GPU Clusters for Deep Learning Training Workloads}

\input{authors}

\begin{abstract}
With the rapid advancement of deep learning technology, shared GPU clusters receive an increasing number of deep learning training (DLT) jobs. Yet resource fragmentation make such clusters underutilized and forces the DLT jobs running on them to endure long turnaround times. Extensive research has been devoted to quantifying fragmentation and developing scheduling algorithms that alleviate its impact. However, existing fragmentation measures break down in the absence of workload distribution information, while current schedulers cannot continuously maintain resource fragmentation at a low level. To tackle these problems, we first introduce Scheduler-Induced Fragmentation (SIF), a metric built on the notion of partial-nodes that is independent of historical workload knowledge. We then propose COMPASS-ABS, which employs the COMPact-ASSured (COMPASS) algorithm to confine the cluster state within a tight Anchor-Based Space (ABS), whose construction fully leverages the topological alignment between dominant workload size and
node capacity. Moreover. We also prove that it ensures SIF is bounded by $\frac{2}{N}$ under a workload composition condition that matches both theory and production. Evaluations implemented on a physical cluster and a simulated cluster demonstrate COMPASS-ABS effectiveness at improving resource utilization, reducing DLT job completion time by reducing fragmentation.
\end{abstract}

\begin{CCSXML}
<ccs2012>
   <concept>
       <concept_id>10010520.10010521.10010537.10003100</concept_id>
       <concept_desc>Computer systems organization~Cloud computing</concept_desc>
       <concept_significance>300</concept_significance>
       </concept>
   <concept>
       <concept_id>10003752.10003809.10010047.10010048.10003808</concept_id>
       <concept_desc>Theory of computation~Scheduling algorithms</concept_desc>
       <concept_significance>500</concept_significance>
       </concept>
 </ccs2012>
\end{CCSXML}

\ccsdesc[300]{Computer systems organization~Cloud computing}
\ccsdesc[500]{Theory of computation~Scheduling algorithms}

\keywords{Deep learning training, shared GPU cluster, fragmentation, resource scheduling}

\maketitle

\input{section1}

\input{section2}

\input{section3}

\input{section4}

\input{section5}

\input{section6}

\input{section7}

\input{section8}

\bibliographystyle{ACM-Reference-Format}
\bibliography{references}

\appendix

\input{Proof/main_proofs}

\end{document}

%% file: authors.tex
\author{Yukai Zhou}

\author{Hongfan Wu}
\authornote{Corresponding author.}

\renewcommand{\shortauthors}{Yukai Zhou and Hongfan Wu}
\hypersetup{pdfauthor={Yukai Zhou, Hongfan Wu}}

%% file: section1.tex
\section{Introduction}
\label{sec:intro}
\vspace{-2pt}

Deep learning has developed rapidly in recent years, driven by larger model frameworks, more complex training pipelines, and the rise of large language models\citep{narayanan2021efficient,jiang2024megascale}. Training these models requires substantial GPU resources, so industrial labs and cloud providers operate large shared GPU clusters where users submit deep learning training (DLT) jobs\citep{jeon2019analysis,weng2022mlaas,hu2021characterization,nvidia2026superpod}. Yet production clusters exhibit both low GPU cluster utilization and long turnaround time for DLT jobs, which stems from the resource fragmentation rather the capacity shortage\citep{weng2023beware,lao2025cafgd}.

\textbf{Resource fragmentation arises from DLT job placement constraints.}
A DLT job consists of multiple homogeneous instances, and the GPUs assigned to each instance need to lie within a single node to preserve fast intra-instance communication\citep{gu2019tiresias,choudhury2024mast}. This intra-node constraint can prevent admission even when the cluster has enough available GPUs in total. Consider a cluster with three nodes that currently have 3, 5, and 2 free GPUs, i.e. 10 free GPUs in total. A job with two 4-GPU instances demanding 8 GPUs in total cannot be admitted because only one node can host a 4-GPU instance. The fragmented distribution of idle GPUs blocks a placement that would appear feasible under an aggregate-capacity view. So resource fragmentation leads to low GPU utilization despite pending jobs awaiting processing, causing prolonged job completion time due to long queueing latency.

This observation raises two central questions: (1) how to quantify the resource fragmentation? (2) how to solve the problem of GPU cluster underutilization and long completion time of DLT jobs caused by resource fragmentation? Existing fragmentation metrics are statistical measures of fragmentation. FGD\citep{weng2023beware} defines a workload-distribution-aware fragmentation score, while CAFGD\citep{lao2025cafgd} refines this idea by weighting long-term and short-term distributions. These metrics rely on workload history knowledge and fail to measure fragmentation in its absence.

Existing schedulers for scheduling DLT jobs in shared GPU clusters can be grouped into two categories. \textbf{Non-migration methods place each job at admission and never revisit the decision}: ElasticFlow\citep{gu2023elasticflow} formulates placement as a bin-packing problem and applies best-fit-style heuristics, while FGD\citep{weng2023beware} and CAFGD\citep{lao2025cafgd} greedily optimize their fragmentation metrics at first admission time. These methods react only to the current state and provide no mechanism to repair fragmentation that has existed. \textbf{\allowbreak Migration-based approaches reorganize running DLT jobs through checkpoint-based migration}. Gandiva\citep{xiao2018gandiva} only suggests migrating running jobs when cluster status changes, without specifying the implementation details like when and how to migrate. FFT\citep{mo2025fft} and DRR\citep{wu2025defrag} adopt a round-driven strategy whose defragmentation
relies on global migration at the beginning of each round, which cannot mitigate the resource fragmentation accumulated within each round. More fundamentally, they all cannot guarantee low fragmentation throughout the online scheduling process.

Motivated by these limitations, we introduce \textbf{Scheduler-Induced Fragmentation (SIF)} to measure dynamic resource fragmentation and design the \textbf{COMPASS-ABS scheduler} for DLT jobs in shared GPU clusters. Our contributions are summarized as follows:

\begin{itemize}
    \item \textbf{Scheduler-Induced Fragmentation (SIF).} Building on the notion of partial-nodes, we decompose observed fragmentation into two parts: an inherent component forced by the intra-node placement constraint, and an additional component caused by the scheduler's placement decisions. SIF measures only the latter, which no longer requires the workload history distribution information.
    \item \textbf{COMPASS-ABS scheduler.} We firstly define \textbf{ABS} (Anchor-Based Space), a constrained feasible cluster state domain whose structure is tailored jointly to the dominance of power-of-two GPU demands and to their topology alignment with node capacity. Multi-GPU instances are placed as anchors in slots of width 2, 4, or 8 GPUs, while 1-GPU instances are managed through a pool and used as fillers for saturating slot vacancies. The \textbf{COMPASS}(COMPact-ASSured) algorithm maintains cluster change within this space in a compact manner through three operators: $\mathsf{Place}$, $\mathsf{Remove}$, and $\mathsf{Compact}$. Under the Workload Composition Condition (WCC) stated in Theorem~\ref{thm:compactness}, COMPASS-ABS guarantees $\mathrm{SIF}^{\pi}(t) \le \frac{2}{N}$ at all times.
    \item \textbf{Evaluation.} We conduct comprehensive experiments to validate the scheduling performance of COMPASS-ABS, including large-scale simulations on two production traces and a physical-cluster deployment with a synthetic DLT workload. COMPASS-ABS attains the highest GPU utilization and the shortest job completion time among all state-of-the-art baselines by maintaining the lowest fragmentation level, and sustains its strength under varying cluster load, workload composition, and migration cost, even when the Workload Composition Condition is violated.
\end{itemize}

%% file: section2.tex
\section{Background}
\label{sec:motivation}
\label{sec:background}

This section reviews the task structure and communication patterns of DLT jobs, together with the architectural properties of physical GPU clusters. We point out two system characteristics that motivate the scheduler design in Section~\ref{sec:method}: the per-instance GPU demand $g_i$ is concentrated on power-of-two values, and the basic communication domain in production clusters is a node of 8 GPUs.

\subsection{Deep Learning Training Jobs}
\label{sec:bg-dlt-jobs}

As deep learning models continue to extend, the memory capacity and compute throughput of a single GPU are insufficient for an increasing fraction of training workloads. Modern frameworks therefore support training across multiple GPUs\citep{narayanan2021efficient,huang2019gpipe,li2020pytorchdistributed}. The growth of training datasets drives the use of data parallelism, where the training program is replicated across multiple instances processing different mini-batches. Thus, a DLT job is composed of multiple homogeneous instances that run the same model and carry the same per-instance GPU demand $g_i$\citep{choudhury2024mast}.

Within each instance, multiple GPUs may collaborate through tensor parallelism and other mechanisms to execute the forward and backward passes\citep{narayanan2021efficient,lepikhin2020gshard,narayanan2019pipedream}. These mechanisms frequently invoke collective communication primitives such as All-Reduce, making the instance sensitive to communication latency\citep{hwang2023tutel}. Tensor parallelism partitions large parameter matrices across GPUs, and common tensor-parallel degrees are 2, 4, and 8. Collective implementations such as ring, recursive-doubling, and tree-based All-Reduce are structurally defined on power-of-two participant counts, while non-power-of-two configurations lead to padding or special-case handling\citep{thakur2005optimization}. These properties make $g_i \in \{1,2,4,8\}$ a natural operating regime, which is consistent with production trace characteristics\citep{jeon2019analysis,weng2022mlaas,hu2021characterization}.

Beyond intra-instance communication, different instances of the same job execute data-parallel synchronization at the end of each iteration to apply a synchronous parameter update, which requires all instances to run concurrently. Thus, schedulers for DLT workloads commonly treat each job as an all-or-nothing allocation unit, which imposes a gang scheduling requirement\citep{gu2019tiresias,ye2024survey}.

\subsection{Physical Cluster}
\label{sec:bg-physical-cluster}

Modern GPU clusters comprise multiple nodes, each of which serves as a primary high-bandwidth communication domain in the network topology\citep{jiang2024megascale,nvidia2026superpod}. GPUs on the same node communicate through NVLink at terabyte-per-second bandwidth, whereas inter-node communication involves lower bandwidth and switch-level congestion. For the latency-sensitive intra-instance communication described above, all GPUs assigned to an instance must belong to the same node, which makes the available GPUs non-interchangeable\citep{xiao2018gandiva,weng2023beware}. This node-locality constraint is the structural source of the fragmentation problem studied in this paper.

Each node contains eight GPUs, matching the configuration reported in production clusters\citep{nvidia2017dgx1,jeon2019analysis,weng2022mlaas,hu2021characterization}. The eight-GPU design has a hardware basis: DGX-1 connects eight GPUs as a hybrid cube-mesh, and later DGX/HGX systems preserve eight GPUs as a server-level building block while improving the internal fabric with NVSwitch\citep{nvidia2017dgx1,li2020evaluating,nvidia2026superpod}. This eight-GPU node is an industry-wide convention rather than an NVIDIA-specific choice, as the vendor-neutral OCP OAM Universal Baseboard standard\citep{ocp_ubb} hosts eight accelerators per baseboard and mainstream accelerator servers adopt the same $G = 8$ layout, including AMD Instinct\citep{amd_mi300x}, Intel Gaudi\citep{intel_gaudi3}, Huawei Ascend\citep{huawei_atlas800}, and Meta's Grand Teton\citep{meta_grandteton}.

%% file: section3.tex
\section{Scheduler-Induced Fragmentation}
\label{sec:sif}

This section gives the measure of resource fragmentation without relying on history trace information. In \S\ref{sec:avoidable-forced}, we introduce the notion of partial-nodes, which are the principal carrier of fragmented resource, and separate them into two mutually exclusive proportions. Building on this distinction, \S\ref{sec:decomposing} then defines a new measure called Scheduler-Induced Fragmentation (SIF), computed solely from the resource demand of currently running jobs in the cluster.

\subsection{Key Insight: Avoidable vs.\ Forced Partial Nodes}
\label{sec:avoidable-forced}

We firstly define a node that is non-empty yet assigned fewer GPUs than its full capacity as a \textbf{partial node}. A fragmented placement disperses free GPUs across many partial nodes and causes contiguous GPU resources to become scarce, whereas a well-designed placement will reduce the partial-node count to release the fragmented GPUs trapped inside partial nodes and return them to larger contiguous GPU blocks, which is friendly to all DLT jobs.

This concern becomes most pressing when the cluster receives large jobs. Consider a job consisting of $w$ 8-GPU instances which require $w$ fully-empty nodes on the cluster before it can be admitted. Consider a cluster of $N$ nodes in which the current fragmented placement consumes $N_1$ nodes, whereas a compact placement of the same set of running DLT jobs would consume only $N_2 < N_1$. The compact placement therefore leaves $N - N_2$ nodes entirely empty, while the fragmented one leaves only $N - N_1$. Whenever the requested job size satisfies $N - N_1 < w \le N - N_2$, that job is admissible under the compact placement but is blocked under the fragmented one, even though the cluster's aggregate free-GPU count exceeds $8 w$ in both situations. This large DLT job blocking results from fully-empty-node availability: $N_1 - N_2$ nodes trapped in partial occupancy could otherwise serve as the continuous GPU blocks that large jobs demand.

We therefore set our primary objective for mitigating fragmentation as \textbf{minimizing the number of partial nodes}, which can equivalently be stated as minimizing the total number of nodes in use. We adopt the former formulation throughout the remainder of this paper. However, not every partial node is produced by the scheduler: some cannot be eliminated by any scheduler, while others arise from a suboptimal scheduling decision and can be eliminated by right placement and migration. Then we demonstrate the distinction through two illustrative examples.

\textbf{Forced partial nodes.} Suppose the cluster currently has only one partial node, which hosts a single $4$-GPU instance, a DLT job consisting of two $5$-GPU instances waits at the head of queue. Due to intra-node constraint, neither of the two new instances can utilize the 4 free GPUs on this partial node. The only feasible placement is opening two other empty nodes, each hosting a single $5$-GPU instance with $3$ GPUs unused, so that the cluster now contains three partial nodes in total. No scheduler could have done better in this situation, because these three partial nodes are byproducts of systemic constraints rather than incorrect scheduling.
\begin{sloppypar}
\textbf{Avoidable partial nodes.} Consider two nodes that are each fully packed without free GPUs, where a single job has placed one $4$-GPU instance on each node and the remaining $4$ GPUs of each node are occupied by unrelated instances. When this job completes, both nodes simultaneously release a $4$-GPU block and the cluster gains two partial nodes in a single event. This fragmentation is not structurally necessary,  because the scheduler could migrate all instances on the second node into the idle region of the first, after which the cluster would contain zero partial nodes. The two partial nodes here are avoidable, because their existence is entirely attributable to the scheduler's decision.
\end{sloppypar}

\subsection{Decomposing Fragmentation: Inherent and Scheduler-Induced}
\label{sec:decomposing}

Motivated by \S\ref{sec:avoidable-forced}, we now formally split the fragmentation into two complementary components, the portion that any scheduler must leave behind, and the portion introduced by chosen scheduler $\pi$.

\textbf{Inherent Fragmentation.} We denote the set of DLT jobs running on the cluster at time $t$ by $\mathcal{J}(t)$. Each job $i \in \mathcal{J}(t)$ consists of $w_i$ homogeneous instances, 
each demanding $g_i$ GPUs, where $g_i \in \{1, \dots, G\}$ and $G = 8$. The instantaneous demand structure of running DLT jobs is then formalized as $\mathcal{I}(t) = \{(g_i, w_i) : i \in \mathcal{J}(t)\}$. In order to isolate the fragmentation originating from constraints alone, we construct a static bin-packing problem in which each DLT job is treated as $w_i$ items of size $g_i$ and each node as a bin of capacity $G$. The question then becomes: what is the minimum partial-node count achievable over all feasible packings of these items into bins? Formally, for any feasible packing $P$ that satisfies the intra-node constraint, we let $\nu(P)$ denote the number of partial nodes in $P$ and write $\Phi^\star(t) := \min_{P \text{ feasible}} \nu(P)$ for the minimum partial-node count achievable across all such packings. The Inherent fragmentation is then the normalised share
\begin{equation}
F^\star(t) \;:=\; \frac{\Phi^\star(t)}{N} \;=\; \frac{\min_{P \text{ feasible}} \nu(P)}{N}.
\label{eq:fstar}
\end{equation}
Because $F^\star(t)$ depends only on the real time resource demand $\mathcal{I}(t)$ and is independent of arrival order, placement history, or migration operation, it serves as an \textbf{unavoidable lower bound} on the partial-node share. A heuristic algorithm for computing $\Phi^{\star}(t)$ (and therefore $F^\star(t)$) is provided in Appendix~\ref{app:sec:algorithm-SIFOpt}.

\textbf{Scheduler-Induced Fragmentation.} Under an online policy $\pi$, the placement of all jobs in the cluster observed at time $t$ is itself a feasible packing $P^\pi(t)$ of $\mathcal{I}(t)$, and we let $F^\pi(t) := \nu(P^\pi(t))/N$ denote its partial-node share, which represents the gross fragmentation. We define the Scheduler-Induced Fragmentation (SIF) by subtracting the inherent fragmentation from the gross fragmentation:
\begin{equation}
\mathrm{SIF}^\pi(t) \;:=\; F^\pi(t) - F^\star(t) \;\ge\; 0,
\label{eq:sif}
\end{equation}
which represents the percentage of avoidable partial nodes that the policy $\pi$ has produced. This definition makes SIF a more reasonable fragmentation measure than the statistical metrics reported in FGD and CAFGD. It builds upon the resource demand of the running jobs rather than further historical information about the workload resource demand distribution. Moreover, it accurately quantifies the proportion of fragmentation that is introduced by the scheduler. Therefore, it motivates us to design a scheduler that persistently keeps $\mathrm{SIF}^\pi(t)$ at a low level.

%% file: section4.tex
\section{COMPASS-ABS Scheduler}
\label{sec:method}

This section presents our COMPASS-ABS scheduler that employs the COMPASS algorithm to maintain the cluster evolving in the compact ABS domain. In Section~\ref{sec:design-rationale}, we introduce our design motivation for the innovative scheduler; In Section~\ref{sec:state-space}, we define a feasible space ABS for the cluster to operate in, driven by the characteristic of workload composition and cluster topology; In Section~\ref{sec:scheduling-policy}, we present the COMPASS algorithm that constrains the cluster state within the ABS in a consolidated manner. In Section~\ref{sec:guarantees}, we further provide theoretical guarantee on the ability of our scheduler to keep  $\mathrm{SIF}$ low at minimal computational cost.

\subsection{Design Rationale}
\label{sec:design-rationale}

\textbf{Observation 1: Power-of-two-sized multi-GPU instances scheduled by the best-fit policy when node capacity is 8 keep $\mathrm{SIF}\le \frac{2}{N}$.}
Before examining fragmentation in detail, we set aside the instances whose $g_i=1$, denoted as $\mathcal{I}^{=1}$, because they never create fragmentation but rather act as the natural filler units to mitigate fragmentation by utilizing the scattered idle GPUs left on partial nodes. Restricting attention to the multi-GPU instances, denoted as $\mathcal{I}^{\ge 2}$, we exploit a divisibility alignment between the instance demand and the node capacity: every $g_i \in \{2, 4, 8\}$ divides $G = 8$ exactly, so any combination of such instances either fully fills a node or leaves a residue GPU block of size in $\{2, 4, 6\}$. Under this alignment, a best-fit policy that places each instance on the node leaving the smallest remaining capacity after placement, together with a lightweight Compact pass that consolidates partial nodes after every departure, retains $\mathrm{SIF}^{\pi}(t)\le \frac{2}{N}$ at all times. The formal proof, including the matching tightness construction, is deferred to Appendix~\ref{app:sec:bf-power-of-two}. We accordingly name the multi-GPU instances with $g_i \in \{2,4,8\}$ as \textbf{Fragmentation-Friendly instances} and denote them by $\mathcal{I}^{\mathrm{FF}}$, whereas instances whose demand falls in $\{3,5,6,7\}$ are called \textbf{Fragmentation-Unfriendly instances}, denoted by $\mathcal{I}^{\mathrm{FUF}}$, so all multi-GPU instances can be partitioned into  $\mathcal{I}^{\ge 2} \;:=\; \mathcal{I}^{\mathrm{FF}} \;\cup\; \mathcal{I}^{\mathrm{FUF}}$.

\textbf{Observation 2: Fragmentation-unfriendly instances account for only a small share of real DLT workloads.}
The structural reasons in \S\ref{sec:bg-dlt-jobs} make $g_i \in \{1,2,4,8\}$ the natural operating regime for DLT instances, so non-power-of-two instance sizes $g_i \in \{3,5,6,7\}$ account for only a small fraction of the instance population in production traces. Table~\ref{tab:instance-mix} confirms this claim by reporting the average per-type instance shares observed on the GFS\citep{duan2026gfs} and Venus\citep{hu2021characterization} traces, where the $g_i=1$ and $g_i=8$ classes together dominate the trace and the $\mathcal{I}^{\mathrm{FUF}}$ retains only a marginal share.

These observations lead to two design ideas.

\textbf{Design idea 1: Round every fragmentation-unfriendly instance into a fragmentation-friendly composite block with fillers.}
Observation 1 already points out that a best-fit placement keeps $\mathrm{SIF} \le \frac{2}{N}$ once the workload is restricted to power-of-two-sized instances, and we therefore want to extend the same guarantee to the fragmentation-unfriendly population. To this end, we round instance in $\mathcal{I}^{\mathrm{FUF}}$ into a friendly-sized composite block whose total size falls in $\{2, 4, 8\}$ by combining it with smaller instances. Then the scheduler only sees fragmentation-friendly-shaped blocks rather than a heterogeneous mixture of friendly and unfriendly demands under this transformation, so that the placement decisions on the entire multi-GPU population $\mathcal{I}^{\ge 2}$ reduce to the $\mathcal{I}^{\mathrm{FF}}$-only case.

\textbf{Design idea 2: Employ 1-GPU instances to assemble the fragmentation-friendly composite blocks and manage them in a dedicated pool.}
As 1-GPU instances largely outweigh those in $\mathcal{I}^{\mathrm{FUF}}$ and they are the natural filler units to mitigate fragmentation, we accordingly use 1-GPU instances to fill the residual GPUs inside each composite block, namely the gap between the original unfriendly instance and its rounded friendly block. Due to the dynamic arrival and departure of $\mathcal{I}^{\mathrm{FUF}}$ instances, the requests for fillers change over time, and the cluster needs to adaptively migrate 1-GPU instances to the right place to maintain the integrity of the composite block. This motivates us to place all 1-GPU instances in a unified pool for better management.

\begin{table}[h]
  \caption{Per-class instance share across two DLT traces.}
  \label{tab:instance-mix}
  \small
  \setlength{\tabcolsep}{6pt}
  \begin{tabular}{lcccc}
    \toprule
    Trace  & $\mathcal{I}^{=1}$ & $\mathcal{I}^{\mathrm{FF}}$ & $\mathcal{I}^{\mathrm{FUF}}$ & $\mathcal{I}^{=8}$ \\
    \midrule
    GFS    & 64.16\% &  6.07\% & 0.05\% & 29.73\% \\
    Venus  & 37.68\% &  9.44\% & 0.44\% & 52.44\% \\
    \bottomrule
  \end{tabular}
\end{table}

\subsection{Anchor-Based Space (ABS)}
\label{sec:state-space}

Driven by the design ideas above, we define a structured state space for the cluster that we call the \textbf{Anchor-Based Space (ABS)}, within which the placement of all DLT jobs is required to remain throughout the cluster's evolution. Then we formalize the key objects in the ABS domain.

\begin{definition}[Slot]
\label{def:slot}
A \textbf{slot}, denoted by $s$, is a GPU block inside a node. Each slot has a fixed width $w(s) \in \{2,4,8\}$. When a multi-GPU instance of demand $g$ is assigned to a slot, the slot width is chosen according to
\begin{equation}
w^\star(g) \;=\;
\begin{cases}
2, & g = 2, \\
4, & g \in \{3,4\}, \\
8, & g \in \{5,6,7,8\}.
\end{cases}
\label{eq:wstar}
\end{equation}
The instances belonging to $\mathcal{I}^{\mathrm{FF}}$ will directly saturate width-matched slots, while counterpart instances will be assigned to rounded-up slots.
\end{definition}

\begin{definition}[Anchor and Slot Vacancy]
\label{def:anchor}
On each slot $s$, there is exactly one instance $a(s) \in \mathcal{I}^{\ge 2}$ called the \textbf{anchor} of $s$ that occupies $g_{a(s)}$ GPUs in that slot, where $g_{a(s)}$ is the demand of that instance. The remaining $w(s)-g_{a(s)}$ GPUs in the slot are called the \textbf{slot vacancy} of $s$. Whenever $a(s) \neq \varnothing$, the slot width satisfies $w(s)=w^\star(g_{a(s)})$.
\end{definition}

\begin{definition}[Filler and Capacity Invariant]
\label{def:filler}
When the slot vacancy of $s \neq 0$, we only employ 1-GPU instances to saturate the slot vacancy, which are called \textbf{fillers}. Representing the set of fillers currently attached to $s$ as $F(s)$, we require
\begin{equation}
g_{a(s)} + |F(s)| \;\le\; w(s).
\label{eq:capacity-invariant}
\end{equation}
Equality means the slot is fully used.
\end{definition}

\begin{definition}[Slotted node and slot configuration set]

\label{def:slotted-node}

Node $n$ that hosts at least one slot is called a \textbf{slotted node}, and we denote the multiset of slots currently profiled on $n$ by $S(n)$, where the slots in $S(n)$ are non-overlapping. The residual $G - \sum_{s \in S(n)} w(s)$ GPUs on $n$ have not yet been profiled as any slot, and a new slot is created over them only when an anchor is placed on $n$. The collection of all feasible slot profiling plans forms the \textbf{slot configuration set}

\begin{equation}
\mathsf{SlotConf} \;:=\; \Bigl\{\, \{s_1, \ldots, s_k\} \;\Big|\;
s_i \in \{2, 4, 8\}\ \forall\, i,\;
\sum_{i=1}^{k} s_i \le G \,\Bigr\},
\label{eq:slotconf}
\end{equation}
where $S(n) \in \mathsf{SlotConf}, \forall n$.
\end{definition}

\begin{definition}[Pool node]
\label{def:pool-node}
A \textbf{pool node} is a node dedicated to 1-GPU instances. For a pool node $n$, we write $P(n) \subseteq \mathcal{I}^{=1}$ for the set of 1-GPU instances currently on $n$, with $|P(n)| \le G$. A 1-GPU instance on a pool node is not acting as a filler but can be moved to the slot vacancy when necessary.
\end{definition}

Building on the definitions above, every node in the cluster falls into three mutually exclusive types: an empty node holds no instances, a slotted node whose $S(n) \in \mathsf{SlotConf}$, and a pool node whose $P(n) \subseteq \mathcal{I}^{=1}$. We accordingly collect these three labels into the node-type set
\begin{equation}
\mathsf{T} \;:=\; \{\,\mathsf{empty},\; \mathsf{pool},\; \mathsf{slotted}\,\},
\label{eq:node-type}
\end{equation}
and write $t(n) \in \mathsf{T}$ for the type currently assigned to node $n$. This coarse classification underlies the formal definition of the feasible cluster state space $\Sigma$ given below.

\textbf{Anchor-Based Space (ABS) $\Sigma$.}\ 
For each node $n$, the internal state $\sigma_n$ is determined by $t(n)$:
\begin{equation}
\sigma_n \;=\;
\begin{cases}
\bot, & t(n)=\mathsf{empty}, \\[2pt]
P(n) \subseteq \mathcal{I}^{=1},\; |P(n)| \le G, & t(n)=\mathsf{pool}, \\[2pt]
\bigl((a_j,F_j)\bigr)_{j=1}^{k}, & t(n)=\mathsf{slotted}.
\end{cases}
\label{eq:sigma-n}
\end{equation}
where $a_j \in \mathcal{I}^{\ge 2}$, $F_j \subseteq \mathcal{I}^{=1}$ and together satisfy the capacity invariant in \eqref{eq:capacity-invariant}. The cluster-level state space is therefore
\begin{equation}
\Sigma \;=\; \Bigl\{\, \sigma = \bigl(\pi,(\sigma_n)_{n \in \mathcal{N}}\bigr) \;\Big|\; \pi: \mathcal{N} \to \mathrm{T},\; \sigma_n \text{ as above} \,\Bigr\}.
\label{eq:Sigma}
\end{equation}

This state-space design is organized entirely around the notion of anchor, which provides a unified scheduling unit for both fragmentation-friendly and fragmentation-unfriendly multi-GPU instances. Starting from this concept, the gap between a slot's width and its anchor's actual demand naturally introduces fillers to absorb the residual capacity inside each slot, and the practical need to migrate these fillers flexibly across the cluster in turn motivates the design of pool nodes as the dedicated pool for 1-GPU instances. Therefore, we name this state space $\Sigma$ as \textbf{Anchor-Based Space} (ABS).

\subsection{COMPASS Algorithm}
\label{sec:scheduling-policy}

We propose the COMPact-ASSured (COMPASS) algorithm that consists of three operators, the $\mathsf{Place}$ and $\mathsf{Remove}$ operators together guarantee that the cluster state lies within ABS $\Sigma$ as the job is admitted and completed, while the $\mathsf{Compact}$ ensures the compactness of ABS $\Sigma$.

Each DLT job is admitted only when all of its instances can be placed simultaneously and is released as a whole upon completion. We therefore formulate scheduling decisions at the instance granularity, where every job-level scheduling is translated into a sequence of per-instance operations whose union realises the corresponding job-level decision, and for an instance $i$ we write $g_i$ for its GPU demand.

\subsubsection{Common Notation}

We write $\sigma \in \Sigma$ for the current cluster state, $\mathcal{D}$ for the set of legal destinations where a single instance can be assigned, and $\mathcal{M}^\star$ for the space of finite migration lists. A destination $d \in \mathcal{D}$ takes one of four forms: a slotted node where a new slot of the required width can be profiled, a slot vacancy of an anchored slot, a partial pool node, or an empty node waiting to be opened. A single migration is written as a pair $(i, d)$ that moves instance $i$ to destination $d$, and the update notation $\sigma \oplus M$ means that the migrations in $M \in \mathcal{M}^\star$ are applied to $\sigma$ in sequence.

\subsubsection{Place Operator}

The placement operator carries the signature $\mathsf{Place}: \Sigma \times \mathcal{I} \to \mathcal{D} \times \mathcal{M}^\star$, which takes the current cluster state $\sigma$ together with an arriving instance $i$ and returns the destination $d$ where $i$ is to be installed alongside the migration list $M$.

If $i \in \mathcal{I}^{\ge 2}$, the scheduler searches for a slotted node with at least $w^\star(g_i) \in \{2, 4, 8\}$ free GPUs, or opens a fresh empty node if none exists; it then profiles a width-$w^\star(g_i)$ slot and places $i$ as the anchor. When $i \in \mathcal{I}^{\mathrm{FUF}}$, the scheduler migrates up to $w^\star(g_i) - g_i$ 1-GPU instances from the pool nodes as fillers, subject to filler availability in the pool.

If $i \in \mathcal{I}^{=1}$, the scheduler places $i$ following a strict priority order: into an existing slot vacancy if one exists; otherwise into a partial pool node; and only resorts to a newly opened pool node when neither exists. This order lets the 1-GPU instance firstly mitigate existing fragmentation before activating any new slotted node.

\begin{center}
\fbox{\begin{minipage}{\dimexpr\linewidth-2\fboxsep-2\fboxrule\relax}
\small
\noindent\textbf{Pseudocode of $\mathsf{Place}(\sigma, i)$.}\\[2pt]
\textbf{Input:}\ state $\sigma$; arriving instance $i$ with demand $g_i$ \\
\textbf{Output:}\ destination $d \in \mathcal{D}$; migration list $M \in \mathcal{M}^\star$

\medskip
\noindent 1:\ $M \leftarrow \emptyset$ \\
\noindent 2:\ \textbf{if}\ $g_i \ge 2$\ \textbf{then} \\
\noindent 3:\ \ \ $w \leftarrow w^\star(g_i)$ \\
\noindent 4:\ \ \ $\mathcal{N}_{\mathrm{cand}} \leftarrow \{\, n : t(n)=\mathsf{slotted},\; G - \sum_{s \in S(n)} w(s) \ge w \,\}$ \\
\noindent 5:\ \ \ \textbf{if}\ $\mathcal{N}_{\mathrm{cand}} \neq \emptyset$\ \textbf{then}\ $n^\star \leftarrow \arg\min_{n \in \mathcal{N}_{\mathrm{cand}}}\bigl(G - \sum_{s \in S(n)} w(s)\bigr)$ \\
\noindent 6:\ \ \ \textbf{else}\ pick any empty node $n^\star$ and set $t(n^\star) \leftarrow \mathsf{slotted}$ \\
\noindent 7:\ \ \ profile a fresh slot $s^\star$ of width $w$ on $n^\star$;\ set $a(s^\star) \leftarrow i$;\ $d \leftarrow s^\star$ \\
\noindent 8:\ \ \ $r \leftarrow w - g_i$ \\
\noindent 9:\ \ \ \textbf{if}\ $r > 0$\ \textbf{then}\ draw up to $r$ 1-GPU instances $\{f_1, \ldots, f_k\}$ from pool nodes;\ $M \leftarrow \{(f_\ell, s^\star) : \ell = 1, \ldots, k\}$ \\
\noindent 10:\ \textbf{else}\ \ \ \emph{(}$g_i = 1$\emph{)} \\
\noindent 11:\ \ \ \textbf{if}\ $\exists\, s$ with $g_{a(s)} + |F(s)| < w(s)$\ \textbf{then}\ $d \leftarrow s$ \\
\noindent 12:\ \ \ \textbf{else if}\ $\exists\, n$ with $t(n)=\mathsf{pool}$ and $|P(n)| < G$\ \textbf{then}\ $d \leftarrow n$ \\
\noindent 13:\ \ \ \textbf{else}\ pick any empty node $n_0$;\ set $t(n_0) \leftarrow \mathsf{pool}$;\ $d \leftarrow n_0$ \\
\noindent 14:\ \textbf{return}\ $(d, M)$
\end{minipage}}
\end{center}

\subsubsection{Remove Operator}

The departure operator carries the signature $\mathsf{Remove}: \Sigma \times \mathcal{I} \to \mathcal{M}^\star$, which returns the migrations triggered when instance $i$ departs from state $\sigma$. 

If $i \in \mathcal{I}^{\ge 2}$ , it serves as the anchor $a(s)$ of slot $s$. The operator dissolves $s$ by releasing $w(s)$ GPUs, and migrates each filler in $F(s)$ into the pool node through the same 1-GPU rule that $\mathsf{Place}$ applies to it; the migration $\mathcal{M} = \emptyset$ when $a(s) \in \mathcal{I}^{\mathrm{FF}}$, since $F(s) = \emptyset$ in that case.

If $i \in \mathcal{I}^{=1}$, the operator releases $i$ from its current location, which is either on a slotted node or a pool node. In the former case, it additionally migrates another 1-GPU instance from the pool to keep the slot saturated.

\begin{center}
\fbox{\begin{minipage}{\dimexpr\linewidth-2\fboxsep-2\fboxrule\relax}
\small
\noindent\textbf{Pseudocode of $\mathsf{Remove}(\sigma, i)$.}\\[2pt]
\textbf{Input:}\ state $\sigma$; departing instance $i$ \\
\textbf{Output:}\ migration list $M \in \mathcal{M}^\star$

\medskip
\noindent 1:\ $M \leftarrow \emptyset$ \\
\noindent 2:\ \textbf{if}\ $g_i \ge 2$\ \textbf{then}\ \ \emph{($i$ is an anchor)} \\
\noindent 3:\ \ \ $s \leftarrow$ the slot whose anchor is $i$ \\
\noindent 4:\ \ \ \textbf{for each}\ $f \in F(s)$\ \textbf{do} \\
\noindent 5:\ \ \ \ \ choose destination $d_f$ for $f$ by the 1-GPU rule of $\mathsf{Place}$ \\
\noindent 6:\ \ \ \ \ append $(f, d_f)$ to $M$ \\
\noindent 7:\ \ \ \textbf{end for} \\
\noindent 8:\ \ \ dissolve $s$ and release its $w(s)$ GPUs on the host node \\
\noindent 9:\ \textbf{else}\ \ \emph{($g_i = 1$)} \\
\noindent 10:\ \ \ detach $i$ from its current slot vacancy or pool node \\
\noindent 11:\ \textbf{return}\ $M$
\end{minipage}}
\end{center}

\subsubsection{Compact Operator}

We denote the compaction operator by $\mathsf{Compact}: \Sigma \to \mathcal{M}^\star$, which runs after each job event, namely the arrival or departure of each DLT job. It performs consolidation via three sequential calls to a per-class subroutine $\mathsf{BestFitDrain}$

Within each invocation, $\mathsf{BestFitDrain}$ identifies the partial hosts currently holding class-$\mathcal{C}$ items, orders them by descending room so that the sparsest source is processed first, and migrates the items of that source one by one into the partial host whose remaining class-$\mathcal{C}$ room is the smallest among those still able to accept the migrated item. For the three concrete classes used by $\mathsf{Compact}$, the pool pass takes 1-GPU instances as items and pool nodes as hosts with room $G - |P(n)|$, while the width-$4$ and width-$2$ passes take anchored slots of the corresponding width as items and slotted nodes as hosts with room $\lfloor D(n)/4 \rfloor$ and $\lfloor D(n)/2 \rfloor$ respectively, where $D(n) = G - \sum_{s \in S(n)} w(s)$.

\begin{center}
\fbox{\begin{minipage}{\dimexpr\linewidth-2\fboxsep-2\fboxrule\relax}
\small
\noindent\textbf{Pseudocode of $\mathsf{Compact}(\sigma)$.}\\[2pt]
\textbf{Input:}\ state $\sigma$ \\
\textbf{Output:}\ migration list $M \in \mathcal{M}^\star$

\medskip
\noindent 1:\ $M \leftarrow \emptyset$ \\
\noindent 2:\ \textbf{for}\ each class $\mathcal{C} \in \langle \mathrm{pool},\, \mathrm{w4},\, \mathrm{w2} \rangle$\ \textbf{do} \\
\noindent 3:\ \ \ $\mathcal{N} \leftarrow \{\, n : 0 < \mathrm{count}_\mathcal{C}(n) < \mathrm{cap}_\mathcal{C}(n) \,\}$ \\
\noindent 4:\ \ \ \textbf{sort}\ $\mathcal{N}$ by $\mathrm{room}_\mathcal{C}(\cdot)$ in descending order \\
\noindent 5:\ \ \ \textbf{while}\ $|\mathcal{N}| \ge 2$\ \textbf{do} \\
\noindent 6:\ \ \ \ \ $n_{\mathrm{src}} \leftarrow \mathrm{head}(\mathcal{N})$;\ $\mathrm{progress} \leftarrow \mathrm{false}$ \\
\noindent 7:\ \ \ \ \ \textbf{for each}\ class-$\mathcal{C}$ item $x$ on $n_{\mathrm{src}}$\ \textbf{do} \\
\noindent 8:\ \ \ \ \ \ \ $\mathcal{S} \leftarrow \{\, n \in \mathcal{N} \setminus \{n_{\mathrm{src}}\} : \mathrm{room}_\mathcal{C}(n) \ge \mathrm{size}(x) \,\}$ \\
\noindent 9:\ \ \ \ \ \ \ \textbf{if}\ $\mathcal{S} = \emptyset$\ \textbf{then break} \\
\noindent 10:\ \ \ \ \ \ $n_{\mathrm{dst}} \leftarrow \arg\min_{n \in \mathcal{S}} \mathrm{room}_\mathcal{C}(n)$ \\
\noindent 11:\ \ \ \ \ \ migrate $x$ from $n_{\mathrm{src}}$ to $n_{\mathrm{dst}}$;\ append to $M$;\ $\mathrm{progress} \leftarrow \mathrm{true}$ \\
\noindent 12:\ \ \ \ \ \textbf{end for} \\
\noindent 13:\ \ \ \ \ \textbf{if}\ $n_{\mathrm{src}}$ has no class-$\mathcal{C}$ item left\ \textbf{then}\ remove $n_{\mathrm{src}}$ from $\mathcal{N}$ \\
\noindent 14:\ \ \ \ \ \textbf{if not}\ $\mathrm{progress}$\ \textbf{then break} \\
\noindent 15:\ \ \ \textbf{end while} \\
\noindent 16:\ \textbf{end for} \\
\noindent 17:\ \textbf{return}\ $M$

\medskip
\noindent\textit{Notation.}\ For class $\mathcal{C} \in \{\mathrm{pool},\, \mathrm{w4},\, \mathrm{w2}\}$, $\mathrm{count}_\mathcal{C}(n)$ is the number of class-$\mathcal{C}$ items currently on $n$, $\mathrm{cap}_\mathcal{C}(n)$ is the maximum number of class-$\mathcal{C}$ items that $n$ can hold, $\mathrm{room}_\mathcal{C}(n) = \mathrm{cap}_\mathcal{C}(n) - \mathrm{count}_\mathcal{C}(n)$, and $\mathrm{size}(x) = 1$ in all three classes. The local variables $n_{\mathrm{src}}$ and $n_{\mathrm{dst}}$ denote, respectively, the source node from which an item is migrated and the destination node into which it is migrated within the current best-fit drain pass.
\end{minipage}}
\end{center}

\begin{algorithm}[ht]
\framebox[\linewidth]{%
\begin{minipage}{0.93\linewidth}
\small
\noindent\textbf{Input:}\ initial state $\sigma_0 \in \Sigma$;\ online event stream $\langle e_1, e_2, \ldots \rangle$, in which each event $e_k$ is either the arrival or the departure of a DLT job $j$ with $w_j$ instances \\
\textbf{Output:}\ evolving state $\sigma$ updated after every event

\medskip
\noindent 1:\ $\sigma \leftarrow \sigma_0$ \\
\noindent 2:\ \textbf{for}\ each event $e$ in arrival order\ \textbf{do} \\
\noindent 3:\ \ \ \textbf{if}\ $e$ is the arrival of job $j$\ \textbf{then} \\
\noindent 4:\ \ \ \ \ \textbf{for}\ $k := 1, 2, \ldots, w_j$\ \textbf{do} \\
\noindent 5:\ \ \ \ \ \ \ $i \leftarrow$ the $k$-th instance of $j$ \\
\noindent 6:\ \ \ \ \ \ \ $(d, M) \leftarrow \mathsf{Place}(\sigma, i)$ \\
\noindent 7:\ \ \ \ \ \ \ install $i$ at $d$ and then apply $M$ to $\sigma$ \\
\noindent 8:\ \ \ \ \ \textbf{end for} \\
\noindent 9:\ \ \ \ \ $\sigma \leftarrow \sigma \oplus \mathsf{Compact}(\sigma)$ \\
\noindent 10:\ \ \ \textbf{else if}\ $e$ is the departure of job $j$\ \textbf{then} \\
\noindent 11:\ \ \ \ \ \textbf{for}\ $k := 1, 2, \ldots, w_j$\ \textbf{do} \\
\noindent 12:\ \ \ \ \ \ \ $i \leftarrow$ the $k$-th instance of $j$ \\
\noindent 13:\ \ \ \ \ \ \ $M \leftarrow \mathsf{Remove}(\sigma, i)$ \\
\noindent 14:\ \ \ \ \ \ \ remove $i$ from its current location and then apply $M$ to $\sigma$ \\
\noindent 15:\ \ \ \ \ \textbf{end for} \\
\noindent 16:\ \ \ \ \ $\sigma \leftarrow \sigma \oplus \mathsf{Compact}(\sigma)$ \\
\noindent 17:\ \ \ \textbf{end if} \\
\noindent 18:\ \textbf{end for} \\
\noindent 19:\ \textbf{return}\ $\sigma$
\end{minipage}}
\caption{The COMPASS algorithm: online event handler that composes the $\mathsf{Place}$, $\mathsf{Remove}$, and $\mathsf{Compact}$ operators to evolve the cluster state $\sigma$ inside the compact ABS $\Sigma$.}
\label{alg:compass}
\end{algorithm}

\subsection{Theoretical Guarantees}
\label{sec:guarantees}

We now state the main theoretical properties of the COMPASS-ABS scheduler. Proposition~\ref{prop:invariance} validates that COMPASS confines the cluster state to the ABS domain $\Sigma$ throughout. Theorem~\ref{thm:compactness} is the main claim of this section and shows that COMPASS-ABS keeps $\mathrm{SIF}^{\pi}(t)$ bounded by $2/N$ whenever the workload satisfies a realistic composition condition. Proposition~\ref{prop:complexity} characterises the per-operator computational complexity of the COMPASS algorithm. Detailed proofs of all three results are deferred to Appendix~\ref{app:sec:proofs}.

\begin{proposition}[State-Space Invariance]
\label{prop:invariance}
For any initial state $\sigma_0 \in \Sigma$ and any event sequence of job arrivals and job departures, every state produced by the COMPASS algorithm (Algorithm~\ref{alg:compass}) remains in $\Sigma$.
\end{proposition}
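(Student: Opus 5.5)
The proof will be an induction over the event stream. Since Algorithm~\ref{alg:compass} changes $\sigma$ only through individual applications of $\mathsf{Place}$, $\mathsf{Remove}$ and $\mathsf{Compact}$, it suffices to show that each operator maps a state in $\Sigma$ to a state in $\Sigma$. I will make this more robust by checking the invariant after every atomic step: each install, each removal, and each single migration inside $M$. Membership in $\Sigma$ will be unpacked into a short list of checkable conditions:
\begin{enumerate}[label=(\roman*)]
\item every node has exactly one type in $\mathsf{T}$, and its internal state $\sigma_n$ has the form in \eqref{eq:sigma-n};
\item on each slotted node, $S(n)\in\mathsf{SlotConf}$, i.e.\ the slot widths lie in $\{2,4,8\}$ and sum to at most $G$;
\item every slot has exactly one anchor in $\mathcal{I}^{\ge 2}$, with $w(s)=w^\star(g_{a(s)})$;
\item every filler is a 1-GPU instance and the capacity invariant \eqref{eq:capacity-invariant} holds;
\item every pool node holds only 1-GPU instances with $|P(n)|\le G$.
\end{enumerate}
The base case is the hypothesis $\sigma_0\in\Sigma$.

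For $\mathsf{Place}$ I will split on $g_i$. If $g_i\ge 2$, line 4 only selects nodes with $D(n)\ge w^\star(g_i)$, or line 6 opens an empty node with $D=G\ge 8$. The new slot therefore keeps $\sum w(s)\le G$, giving (ii), and it gets anchor $i$ with width $w^\star(g_i)$, giving (iii). Line 9 adds at most $r=w-g_i$ fillers, so \eqref{eq:capacity-invariant} holds for the new slot. Removing those fillers from their pool nodes only decreases $|P(n)|$. If a pool node becomes empty this way, I will relabel it $\mathsf{empty}$ so that (i) stays valid; I need to state this convention explicitly, since the pseudocode is silent about it. If $g_i=1$, each of the three branches respects a strict inequality before insertion: a slot with $g_{a(s)}+|F(s)|<w(s)$, a pool node with $|P(n)|<G$, or a freshly opened pool node. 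The corresponding capacity therefore holds after adding one unit.

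For $\mathsf{Remove}$, if $i$ is an anchor, each of its fillers is rerouted by the 1-GPU rule, which I have just shown preserves the invariant; since the fillers are moved out first, the rule never targets the dying slot in a way that matters. Dissolving $s$ then lowers $\sum w(s)$, and a node left with no slots is relabelled $\mathsf{empty}$. Removing a 1-GPU instance only decreases occupancy. The optional refill from the pool mentioned in the text is one more 1-GPU move into a slot that has strict vacancy, so it is safe.

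The main obstacle is $\mathsf{Compact}$, because its migrations move whole anchored slots between slotted nodes, and I must show these moves keep (ii) and (iii). For the pool pass, $\mathrm{room}=G-|P(n)|$ and line 8 requires $\mathrm{room}\ge 1$ at the destination, so $|P(n)|\le G$ is preserved. For the w4 and w2 passes, an item is an anchored slot moved together with its anchor and fillers, so (iii) and (iv) travel with it unchanged. The destination is eligible only if $\lfloor D(n)/w\rfloor\ge 1$, i.e.\ $D(n)\ge w$, so the destination's slot sum stays $\le G$, while the source's slot sum only decreases. I must also argue that a node emptied of all slots by these passes becomes $\mathsf{empty}$, and that no pass ever mixes pool items with slotted nodes. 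The latter holds because each class's host set is type-homogeneous by definition. Composing the three operators over each job's instance loop, followed by $\mathsf{Compact}$, completes the induction.
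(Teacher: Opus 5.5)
Your proposal is correct and follows essentially the same route as the paper's proof: induction over events, with a branch-by-branch check that $\mathsf{Place}$, $\mathsf{Remove}$, and $\mathsf{Compact}$ each preserve $\mathsf{SlotConf}$, the anchor-width rule, the capacity invariant \eqref{eq:capacity-invariant}, and $|P(n)|\le G$, where slot migrations are licensed by the room condition $\lfloor D(n)/w\rfloor\ge 1$. Your extra bookkeeping goes a little beyond what the paper spells out but leaves the argument unchanged. That bookkeeping consists of relabelling emptied nodes as $\mathsf{empty}$, covering the pool refill in the 1-GPU $\mathsf{Remove}$ branch, and fixing that fillers leave a dissolving slot before it is dissolved.
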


\begin{theorem}[Compactness under the Workload Composition Condition]
\label{thm:compactness}
Let $n_g(t)$ denote the number of active instances of per-GPU demand $g \in \{1,\ldots,8\}$ at time $t$, write $n_{\mathrm{total}}(t) = \sum_{g=1}^{8} n_g(t)$ for the total active instance count, and let $p_g(t) := n_g(t)/n_{\mathrm{total}}(t)$ denote the per-class instance share. Suppose that at every time $t$ the workload composition satisfies the \textbf{Workload Composition Condition}
\begin{equation}
p_1(t) \;\ge\; p_3(t) + 3\,p_5(t) + 2\,p_6(t) + p_7(t).
\label{eq:workload-cond-frac}
\end{equation}
Then in every cluster state $\sigma(t)$ produced by Algorithm~\ref{alg:compass} in response to the online event stream, at most one slotted node is non-fully-packed and at most one pool node is non-fully-packed, and consequently the online placement produced by COMPASS-ABS satisfies
\begin{equation}
\mathrm{SIF}^{\pi}(t) \;\le\; \frac{2}{N}, \qquad \forall\, t \ge 0
\label{eq:sif-compactness}
\end{equation}
where $N$ is the total number of nodes in the cluster.
\end{theorem}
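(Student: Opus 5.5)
The proof has two parts. First, a structural invariant: after every event handled by Algorithm~\ref{alg:compass}, at most one slotted node and at most one pool node is non-fully-packed. Second, the SIF bound follows from comparing this state with the optimal packing. By Proposition~\ref{prop:invariance}, every state is in $\Sigma$, so each node is empty, pool, or slotted. The only non-fully-packed nodes are therefore pool nodes with $|P(n)|<G$, and slotted nodes that either have unprofiled GPUs ($D(n)>0$) or contain a slot with a vacancy. The argument has to control both kinds of deficiency on slotted nodes.

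\textbf{Step 1 (slot vacancies are always saturated).} Under the Workload Composition Condition \eqref{eq:workload-cond-frac}, multiplying by $n_{\mathrm{total}}(t)$ gives
\[
n_1(t)\ \ge\ n_3(t)+3n_5(t)+2n_6(t)+n_7(t).
\]
The right-hand side is exactly the total vacancy $\sum_s (w(s)-g_{a(s)})$ over all anchored slots, since $w^\star$ maps $3\mapsto4$, $5\mapsto8$, $6\mapsto8$ and $7\mapsto8$. So there are always enough 1-GPU instances to fill every vacancy. I will show by induction over events that every vacancy is filled after each event. $\mathsf{Place}$ sends 1-GPU instances to vacancies first, and it draws fillers from the pool when an unfriendly anchor arrives. $\mathsf{Remove}$ returns fillers to the pool, and it refills a slot from the pool when a filler departs. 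Since the pool contains $n_1-\text{(total vacancy)}\ge0$ instances, a draw never fails. Hence every anchored slot is full, and a slotted node can be non-full only through $D(n)>0$. The delicate point is the intermediate states during a multi-instance job event. I will argue that the condition is only required at event boundaries, where the theorem's claim is made.

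\textbf{Step 2 (at most one partial slotted node and one partial pool node).} This is the best-fit/compaction argument of Observation~1 (Appendix~\ref{app:sec:bf-power-of-two}), applied to slot widths in $\{2,4,8\}$ rather than to raw demands. After Step~1, each slotted node is described by its multiset of slot widths, and $D(n)\in\{0,2,4,6\}$. I will show that after $\mathsf{Compact}$ at most one node has $D(n)>0$. The width-4 pass leaves at most one node whose free space admits another width-4 slot but still holds width-4 slots. The subsequent width-2 pass then drains width-2 slots, by divisibility and best-fit, until at most one node holds free space. For the pool class, best-fit drain leaves at most one pool node with $|P(n)|<G$, since items have size 1 and any two partial pool nodes can always be merged toward one.

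\textbf{Main obstacle.} I expect the main obstacle to be the slotted case. A node such as $\{4,2\}$ with $D=2$ can coexist with a node $\{4\}$ with $D=4$: the width-4 pass cannot move the slot, since $\lfloor 2/4\rfloor=0$, so the width-2 pass must do the work, and there the only movable width-2 slot sits on the node with less room. My first attempt is a potential argument on the vector of $D$-values, showing that the sparsest-source ordering always empties a source or fills a destination. If that fails, I will strengthen the invariant: at most one node contains width-2 slots with $D>0$, maintained inductively by best-fit $\mathsf{Place}$ (line~5).

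\textbf{Step 3 (SIF bound).} With at most two partial nodes, $F^\pi(t)\le 2/N$, and $F^\star(t)\ge0$. Therefore
\[
\mathrm{SIF}^\pi(t)=F^\pi(t)-F^\star(t)\le \frac{2}{N}.
\]
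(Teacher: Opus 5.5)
\paragraph{Verdict.} Your decomposition is the same as the paper's: Lemmas~A.1--A.2 give saturation, Lemmas~A.3--A.4 handle the drain passes, and $F^\star\ge 0$ finishes. Steps~1 and~3 are fine. The gap is in Step~2, and it cannot be closed. Your claim that after $\mathsf{Compact}$ at most one slotted node has $D(n)>0$ is false. The configuration you call the ``main obstacle'' is a genuine counterexample, not a case awaiting a better argument.

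\paragraph{Counterexample.} Let single-instance jobs of demand $4$, $2$, $4$ arrive in that order. This is essentially the paper's own tightness configuration for Observation~1.
\begin{itemize}
\item The first job opens $A=\{4\}$.
\item Best-fit puts the second on $A$, so $A=\{4,2\}$ with $D(A)=2$.
\item The third finds no slotted node with $D\ge 4$ and opens $B=\{4\}$.
\end{itemize}
$\mathsf{Compact}$ then does nothing:
\begin{itemize}
\item In the width-4 pass, $\mathrm{room}_{\mathrm{w4}}(A)=\lfloor 2/4\rfloor=0$, so $\mathrm{count}_{\mathrm{w4}}(A)=\mathrm{cap}_{\mathrm{w4}}(A)$ and $\mathcal{N}=\{B\}$.
\item In the width-2 pass, $\mathrm{count}_{\mathrm{w2}}(B)=0$, so $\mathcal{N}=\{A\}$. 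Destinations are only ever drawn from $\mathcal{N}\setminus\{n_{\mathrm{src}}\}$.
\end{itemize}
So two slotted nodes are non-fully-packed at an event boundary, which already contradicts the structural claim.

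Now let a job of six 1-GPU instances arrive. There is no slot vacancy, since all anchors are fragmentation-friendly. $\mathsf{Place}$ therefore opens a pool node $C$ holding six instances, and again nothing moves. The three partial nodes carry loads $6,4,6$. The total demand $16=2G$ packs as $\{4,4\}$ and $\{2,1,1,1,1,1,1\}$ with no partial node. Hence $\mathrm{SIF}^{\pi}=3/N>2/N$, although the Workload Composition Condition holds trivially ($n_3=n_5=n_6=n_7=0$).

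\paragraph{Why your fallbacks and the paper's argument fail.}
\begin{itemize}
\item The state is a fixed point of $\mathsf{Compact}$, so no potential argument on the $D$-vector can produce progress.
\item Your strengthened invariant (at most one node with width-2 slots and $D>0$) holds in this state, yet says nothing about the lone $\{4\}$ node.
\item It is best-fit $\mathsf{Place}$ itself (line~5) that creates $B$.
\end{itemize}
The paper's Lemma~A.4 tries to dispose of exactly this case by asserting that the width-4 survivor $n^{(4)}$ is an eligible width-2 destination. That contradicts the pseudocode: the width-2 pass never considers a node without a width-2 slot. Its step ``$D(n^{(4)})=2$, which means $S(n^{(4)})=\{4\}$'' is also internally inconsistent, since $\{4\}$ has $D=4$.

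\paragraph{What this means.} The missing piece is not a proof idea; the statement itself fails for the algorithm as specified. To obtain the stated bound you would have to modify $\mathsf{Compact}$ so that width-4 and width-2 slots can be consolidated across the two passes, and then prove the bound for the modified operator. Alternatively, you would have to weaken the conclusion.
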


\begin{proposition}[Operational Complexity of COMPASS]
\label{prop:complexity}
The three operators that compose COMPASS satisfy
\begin{equation}
\begin{aligned}
&\mathcal{T}(\mathsf{Place}) = O(\log N),\\
&\mathcal{T}(\mathsf{Remove}) = O(1), \\
&\mathcal{T}(\mathsf{Compact}) = O(N),
\end{aligned}
\label{eq:operator-complexity}
\end{equation}
where $N$ denotes the number of nodes in the cluster.
\end{proposition}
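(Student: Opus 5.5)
We prove each bound by fixing a concrete set of index structures that COMPASS maintains alongside the state $\sigma$, and then counting the work each operator performs on them. The guiding observation is that every per-node quantity the operators query is drawn from a constant-size range. These quantities are the unprofiled residual $D(n)=G-\sum_{s\in S(n)}w(s)\in\{0,2,4,6,8\}$, the pool occupancy $|P(n)|\in\{0,\ldots,G\}$, the per-class rooms $\lfloor D(n)/4\rfloor$, $\lfloor D(n)/2\rfloor$ and $G-|P(n)|$, and the per-slot vacancy $w(s)-g_{a(s)}-|F(s)|\in\{0,\ldots,3\}$. Here $G=8$ is treated as a constant throughout.

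\textbf{Data structures.} We maintain four structures: (i) slotted nodes bucketed by $D(n)$; (ii) the set of unsaturated slots; (iii) partial pool nodes bucketed by $|P(n)|$; (iv) a free list of empty nodes. In addition, each instance stores a pointer to its host slot or node, and each slot stores pointers to its anchor and fillers. For $\mathsf{Place}$ we allow the slotted-node index to be a balanced search tree (or heap) keyed by $D(n)$. This is the implementation that realises the $O(\log N)$ bound; bucketed lists would even give $O(1)$, and the stated bound holds a fortiori.

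\textbf{$\mathsf{Place}$.} For $g_i\ge 2$, line~5 is a best-fit query for the smallest key $\ge w$, which costs $O(\log N)$. Line~6 pops the empty list in $O(1)$. Profiling the slot and re-keying $n^\star$ costs another $O(\log N)$. The filler draw in line~9 moves at most $w-g_i\le 3$ instances. Each move updates the pool buckets and the slot vacancy in $O(\log N)$ at worst, so the branch costs $O(\log N)$ overall. For $g_i=1$, lines~11--13 each perform one membership or emptiness test on an index, followed by a constant number of updates.

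\textbf{$\mathsf{Remove}$.} The anchor pointer locates $s$ in $O(1)$, and $|F(s)|\le 3$. Each filler is routed by the 1-GPU rule, which, as just shown, consists of bucket and list operations. Dissolving $s$ changes $D(n)$ by $w(s)$, which moves $n$ between buckets in $O(1)$ if the slotted-node index used by $\mathsf{Remove}$ is the bucketed one. Releasing a 1-GPU instance is a single pointer detach plus a bucket update. Hence $\mathsf{Remove}$ runs in $O(1)$.

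\textbf{$\mathsf{Compact}$ and the main obstacle.} The crux is bounding the \texttt{while} loop of $\mathsf{BestFitDrain}$, since a naive reading allows repeated passes over the same sources. For each of the three classes we argue as follows.

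1. Building $\mathcal{N}$ costs $O(N)$. Because room takes values in $\{0,\ldots,8\}$, counting sort orders $\mathcal{N}$ in $O(N)$.
2. Line~10 is served by room-buckets over $\mathcal{N}$, so each $\arg\min$ and each update after a migration costs $O(1)$.
3. Every loop iteration either empties $n_{\mathrm{src}}$, which line~13 then removes from $\mathcal{N}$, or breaks at line~9 with $\mathcal{S}=\emptyset$.
4. In the second case nothing has gained room. If no item moved, line~14 terminates the loop. If some items moved before the break, the next iteration has the same head and still $\mathcal{S}=\emptyset$, so it makes no progress and terminates.
5. Consequently the number of iterations is at most $|\mathcal{N}|+2\le N+2$.
6. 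Each node holds at most $G$ class-$\mathcal{C}$ items, so the total number of inner-loop steps is at most $GN$.

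Each pass therefore costs $O(N)$, and the three passes together give $\mathcal{T}(\mathsf{Compact})=O(N)$.
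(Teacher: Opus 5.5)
Your argument is correct, but it reaches the bounds by a different mechanism than the paper's. The paper keeps the slotted-node, pool and empty-node indices as balanced search trees. $\mathsf{Place}$ is then $O(\log N)$ directly. $\mathsf{Remove}$ is called $O(1)$ only by an accounting convention: its $O(\log N)$ filler reinsertions are charged to the subsequent $\mathsf{Compact}$. $\mathsf{Compact}$ is first bounded by $O(N\log N)$, since each partial host is a source once and performs at most $G$ migrations of cost $O(\log N)$. A brief remark about charging each migration to its destination is then said to remove the logarithm.

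You instead use the fact that every key the operators query lies in a constant-size range: $D(n)$, $|P(n)|$, the class rooms and the slot vacancies. Bucket arrays and counting sort therefore give worst-case $O(1)$ index operations. This supplies what the paper's sketch leaves implicit:
\begin{itemize}
\item a genuine $O(1)$ for $\mathsf{Remove}$, with no charging;
\item an $O(N)$ cost for the sort at line~4 of $\mathsf{Compact}$, which the paper never costs;
\item $O(1)$ argmin queries at line~10, which is what actually removes the logarithm, since charging migrations to destinations does not by itself make search-tree updates cheaper;
\item an explicit bound of $|\mathcal{N}|+2$ iterations of the \texttt{while} loop.
\end{itemize}
Your step~4 also corrects a small inaccuracy in the paper's count. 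A source that migrates some items and then hits the break at line~9 is selected as $n_{\mathrm{src}}$ a second time before the loop exits. The tree-based route is more generic, because it does not need $G$ to be a constant. In the paper's setting $G=8$, however, yours is the more rigorous argument, and it even shows $\mathsf{Place}$ is $O(1)$.

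Two things to tidy up. First, commit to the bucketed slotted-node index in both $\mathsf{Place}$ and $\mathsf{Remove}$. If you also maintain a search tree keyed by $D(n)$ for $\mathsf{Place}$, then dissolving a slot in $\mathsf{Remove}$ must re-key its host in that tree at $O(\log N)$ cost, and your $O(1)$ claim fails. With buckets alone, the best-fit query over $D(n)\in\{2,4,6\}$ is $O(1)$, which meets the stated $O(\log N)$ a fortiori. Second, the pool index used for filler draws at line~9 of $\mathsf{Place}$ must contain every non-empty pool node, including full ones, not only partial ones.
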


%% file: section5.tex
\section{Experimental Evaluation}
\label{sec:experiments}

\subsection{Testbed}
\label{sec:testbed}

We evaluate our scheduler in both a physical cluster and a simulation system. Our implementation is available at \url{https://anonymous.4open.science/r/COMPASSABSF02A/}.

For the physical experiments, our scheduler runs as a controller process
that launches and migrates real PyTorch training jobs on a small cluster of four servers each equipped with eight NVIDIA A100-40GB GPUs, and each migration is realized through a checkpoint-and-restart protocol on the actual training processes. 

For the simulation experiments, we develop an event-driven simulator in
Python to reproduce DLT job production in a shared GPU cluster. To faithfully reflect production behavior, our simulator reproduces the latency incurred by the
checkpoint-based migration mechanism\cite{xiao2018gandiva,gupta2024just}.

\subsection{Cluster Configuration}
\label{sec:cluster-config}

To validate our scheduler under both contended and loose resource regimes,
we replay real traces (or deploy realistic jobs) on a range of cluster
scales defined as follows. Let $\mathcal{J}_{[T_1, T_2]}$ represent the jobs submitted in a chosen window $[T_1, T_2]$, and let
\begin{equation}
D(t) \;=\; \sum_{j \in \mathcal{J}_{[T_1, T_2]}} G_j \,\mathbb{1}\!\left[a_j \le t < a_j + \tau_j\right]
\end{equation}
denote their concurrent GPU demand, with $G_j = W_j g_j$. We define the
window's $\eta$-th percentile demand $p_\eta$ as the smallest level $\ell$
that $D$ stays below for at least an $\eta$ fraction of its support. The
cluster for this window is then sized at
\begin{equation}
N(\eta) \;=\; \max\!\left(\left\lceil p_\eta / 8 \right\rceil,\; N^{\mathrm{maxjob}}\right)
\end{equation}
nodes, where  $N^{\mathrm{maxjob}}$
is the minimum node count required to host the largest single job in
$\mathcal{J}_{[T_1, T_2]}$ under the intra-node constraint and gang
scheduling requirement.

\subsection{Trace}
\label{sec:trace}

Our simulation is conducted on two real production traces, Venus\citep{hu2021characterization} and GFS\cite{duan2026gfs}. Both traces record the number of instances, per-instance GPU demand, submission time, and duration of each job. As an initial preprocessing step, we retain only the DLT jobs that consist of homogeneous instances with identical integer GPU demand. For the physical deployment, we synthesize a DLT trace (\emph{Phys-Mix}) spanning CV\cite{he2016resnet,dosovitskiy2021vit}, NLP\cite{devlin2019bert}, LLM\cite{radford2019gpt2}, recommender system\cite{guo2017deepfm}, and multi-modal\cite{radford2021clip} training tasks, whose resource demand characteristics are aligned with those observed in the two traces above.

\subsection{Comparison}
\label{sec:comparison}

We compare our scheduler against the six state-of-the-art methods.
\begin{itemize}
\item \textbf{GFS}\cite{duan2026gfs}: places each job's instances on
  the nodes whose remaining GPU capacity most closely matches the
  per-instance demand, following a best-fit policy.
\item \textbf{FGD}\cite{weng2023beware}: selects the placement that minimizes the resulting increase in fragmentation they define.
\item \textbf{CAFGD}\cite{lao2025cafgd}: combines its lifecycle-aware
  fragmentation measure with a wait-for-peak admission policy 
\item \textbf{MCG}\cite{wu2025mcgsched}: builds on FGD's fragmentation-gradient
  principle and additionally estimates the demand distribution of pending
  jobs to reorder scheduling priorities.
\item \textbf{FFT}\cite{mo2025fft}: globally reorganizes the placement of all running jobs by solving an ILP at the start of each round, and schedules arriving jobs within each round using a best-fit.
\item \textbf{DRR}\cite{wu2025defrag}: periodically migrates the jobs on the lowest-utilization node onto high-utilization target nodes at the start of each reschedule cycle, and places arriving jobs using a DRL policy trained via imitation learning from a best-fit heuristic.
\end{itemize}

\subsection{Metrics}
\label{sec:metrics}

We evaluate COMPASS-ABS and the baselines using three metrics that
together capture how effectively each scheduler mitigates the consequences
of GPU fragmentation. Among them, AFR (and the underlying SIF) is a
diagnostic quantity we introduce in this work to measure the cluster
fragmentation state; others are key performance indicators for
both users and vendors. Together they provide a complete assessment of
scheduler efficiency.

To focus on contention-induced behavior, we restrict the metrics indicating resource fragmentation degree and GPU utilization to the peak-demand period
\begin{equation}
T_{\mathrm{peak}} \;=\; \{\, t \in T \mid Q(t) \neq \emptyset \,\},
\end{equation}
that is, the time within the observed window $T$ when the waiting queue $Q(t)$ is non-empty.

\textbf{Average Fragmentation Rate (AFR).}
The time-averaged cluster fragmentation level during $T_{\mathrm{peak}}$, where the instantaneous fragmentation $\mathrm{SIF}(t)$ follows the
definition in \eqref{eq:sif}:
\begin{equation}
\mathrm{AFR} \;=\; \frac{1}{|T_{\mathrm{peak}}|} \int_{T_{\mathrm{peak}}} \mathrm{SIF}(t)\, dt.
\end{equation}
A lower AFR indicates that the scheduler keeps the cluster in a less
fragmented cluster state under contention.

\textbf{Average Job Completion Time (AJCT).}
The mean completion time across all DLT jobs submitted:
\begin{equation}
\mathrm{AJCT} \;=\; \frac{1}{M} \sum_{i=1}^{M} C_i,
\end{equation}
where $C_i$ is the completion time of job $J_i$, equivalently defined as the sum of its queueing time in the waiting queue, its training time on the assigned GPUs and latency caused by migration, and $M$ is the total number of jobs evaluated. A lower AJCT indicates that the scheduler admits and finishes jobs more
efficiently, which directly shortens the user-perceived turnaround of DLT workloads.

\textbf{Average GPU Utilization (AGU).}
The time-averaged fraction of active GPUs in the cluster during
$T_{\mathrm{peak}}$:
\begin{equation}
\mathrm{AGU} \;=\; \frac{1}{|T_{\mathrm{peak}}|} \int_{T_{\mathrm{peak}}} \mathrm{GU}(t)\, dt, \quad \mathrm{GU}(t) \in [0, 1].
\end{equation}
A higher AGU indicates that the scheduler converts more of the available
GPU capacity into productive work, which is critical under heavy GPU
demand.

%% file: section6.tex
\section{Evaluation}
\label{sec:evaluation}

\begin{figure*}[t]
  \centering
  \includegraphics[width=\textwidth]{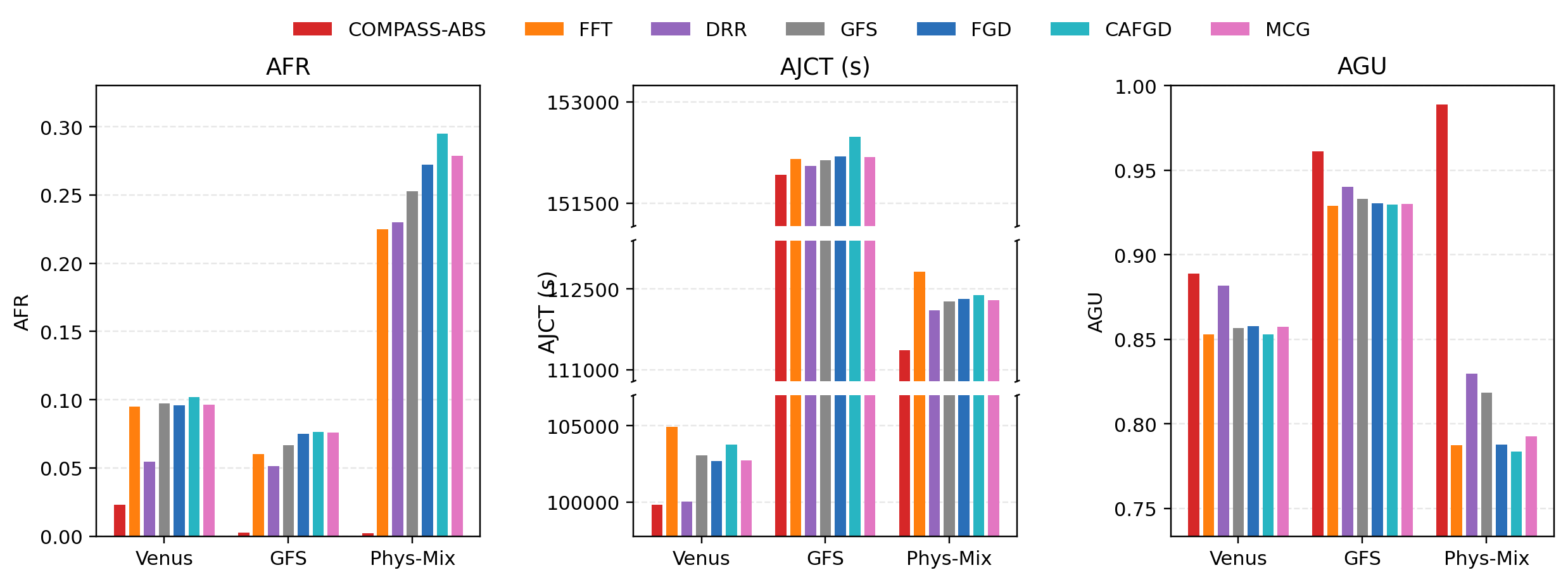}
  \caption{Overall comparison across three DLT workload traces ($\eta=0.90$).}
  \Description{Comparison of job completion time, GPU utilization, and fragmentation across the Venus, GFS, and Phys-Mix workload traces.}
  \label{fig:cross-dataset-comparison}
\end{figure*}

\subsection{Overall Comparison}
\label{sec:overall-comparison}

We first evaluate the end-to-end performance of COMPASS-ABS against all baselines under a fixed resource regime ($\eta = 0.90$). The comparison spans two real production traces replayed in our simulator and Phys-Mix deployed on our physical cluster, with results reported in Fig.~\ref{fig:cross-dataset-comparison}. Overall, COMPASS-ABS dominates the six baselines on every (metric, trace) pair. Concretely, COMPASS-ABS reduces AJCT by at least 212s, 135s and 736s and lifts AGU by at least $0.7\%$, $2\%$ and $16\%$ over the strongest baseline on the Venus, GFS and Phys-Mix traces respectively, while simultaneously attaining the lowest AFR on all three traces with the measured AFR staying below $\frac{2}{N}$, which matches the theoretical bound proved in Section~\ref{sec:guarantees}. COMPASS-ABS, DRR and FFT all support migration-driven defragmentation, yet ours exceeds DRR and FFT for a structural reason: our migration policy is driven by the COMPASS algorithm to maintain the compactness of the predefined feasible state domain ABS, and is tightly coupled with the placement strategy rather than treated as a separate phase at fixed interval.

\subsection{\texorpdfstring{Cluster Intensity Sensitivity}{Cluster Intensity Sensitivity}}
\label{sec:section-cluster-intensity-sensitivity}
\begin{figure*}[t]
  \centering
  \includegraphics[width=\textwidth]{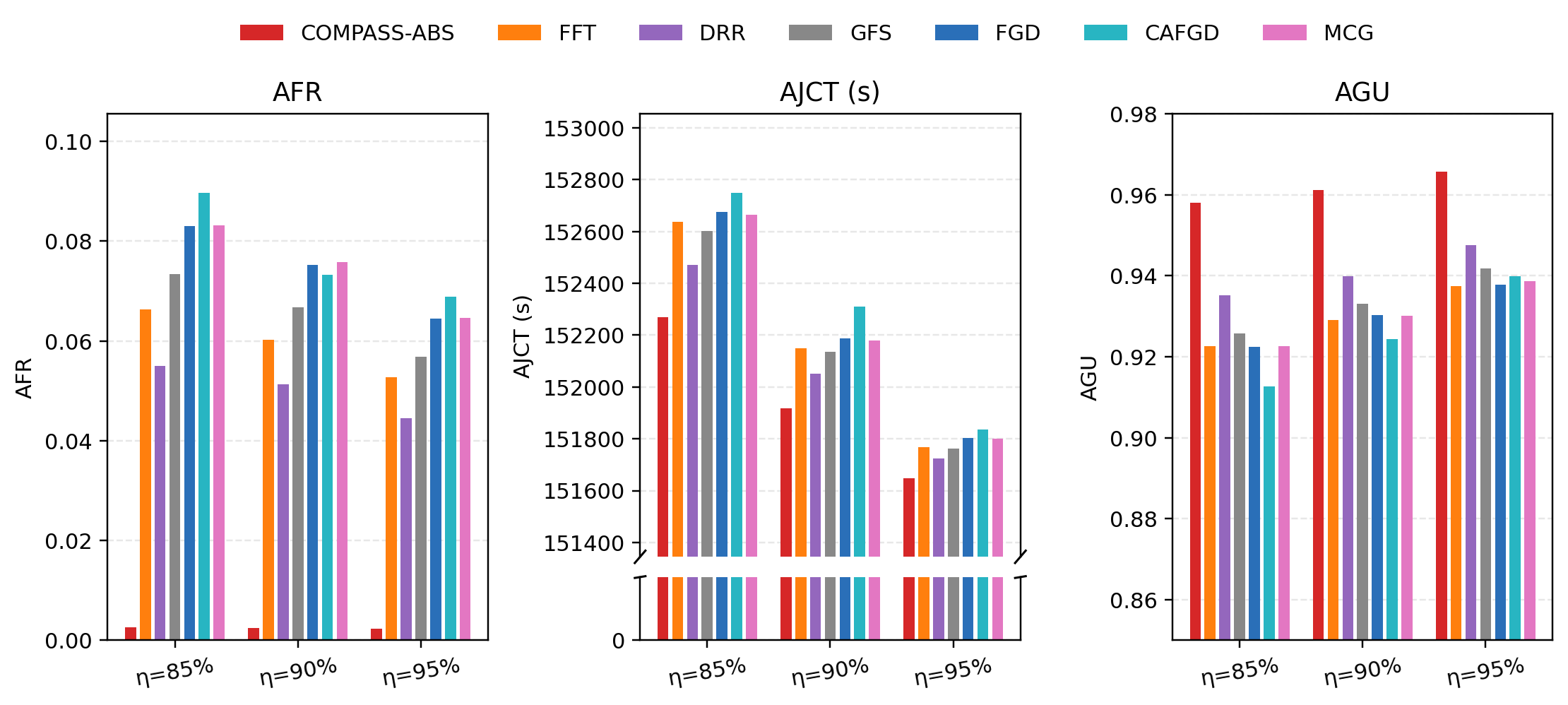}
  \caption{Sensitivity to cluster resource intensity on the GFS trace $\eta \in \{0.85,0.90,0.95\}$.}
  \label{fig:resource-intensity-robustness}
\end{figure*}

To verify that the superiority of COMPASS-ABS over the six baselines is stable across different cluster resource pressures, we sweep the cluster-scaling parameter $\eta \in \{0.85,\allowbreak  0.90,\allowbreak  0.95\}$ on the GFS trace, which carries the largest job count and aggregate GPU demand among the three traces and therefore offers the most discriminative regime for scheduler comparison. Across all three intensity levels, COMPASS-ABS dominates all baselines on every metric reported in Fig.~\ref{fig:resource-intensity-robustness}, with its AFR lying within the narrow band of 0.0023 to 0.0026 and its AGU staying above 0.957, and neither metric exhibits any appreciable sensitivity to resource pressure across the three regimes.

By contrast, the AJCT advantage of COMPASS-ABS over the strongest baseline widens monotonically as the cluster becomes tighter, growing from 74s at $\eta = 0.95$ to 135s at $\eta = 0.90$ and ultimately to 201s at $\eta = 0.85$. The mechanism behind is that fragmentation-induced queueing blockage grows increasingly severe as the cluster tightens: under tight resources, fragmented capacity is far less likely to be reclaimed by the natural departure of running jobs, whereas under loose resources such fragments are readily absorbed. COMPASS-ABS adaptively eliminates fragmentation in real time and thereby recovers this otherwise-trapped capacity, so its fragmentation-aware advantage translates into the largest AJCT reduction when the cluster is at its tightest.

\subsection{Robustness to Workload Composition}
\label{sec:robustness-to-workload-composition}

To verify that the superiority of COMPASS-ABS over the six baselines is also stable across different per-instance GPU-demand distributions, we select three days whose compositions span the widest pairwise discrepancy across our traces (Table~\ref{tab:workload-composition}). Across all three compositions, COMPASS-ABS dominates every baseline on every metric reported in Fig.~\ref{fig:workload-composition-robustness}: its AFR is confined within $[0.002, 0.030]$ against the baselines' wider $[0.058, 0.324]$ band, its AGU stays above $0.89$ against the baselines' $[0.77, 0.90]$ range, and its AJCT leads the strongest baseline by 1{,}643~s on Venus day~92, 1{,}431~s on GFS day~58 and 145~s on GFS day~26.

This robustness arises from the fact that the slot-pool layout exploits the dominance of the fragmentation-friendly $\{2, 4, 8\}$-GPU widths over the fragmentation-unfriendly $\{3, 5,$ $ 6, 7\}$-GPU widths in real DLT workloads, so the anchor-filler mechanism saturates the slots and the COMPASS algorithm confines the cluster state within a compact ABS configuration regardless of how the per-instance GPU-demand distribution shifts. The absolute AJCT lead is the largest on Venus day~92 because most of the JCT on this short-job workload is determined by scheduling decisions, whereas it shrinks on the two GFS days because their longer base runtimes dominate the JCT and leave a smaller scheduling-controllable share for any scheduler to optimize; even so, COMPASS-ABS still ranks first against other baselines.
\begin{table}[h]
  \caption{Workload Composition of the Three Selected Days}
  \label{tab:workload-composition}
  \small
  \setlength{\tabcolsep}{4pt}
  \begin{tabular}{lccc}
    \toprule
    & \multicolumn{3}{c}{Instance Size Ratio} \\
    \cmidrule(lr){2-4}
    Day          & 1 GPU & 2--7 GPU & 8 GPU \\
    \midrule
    Venus Day 92 & 0.73  & 0.06     & 0.21  \\
    GFS Day 58   & 0.57  & 0.04     & 0.39  \\
    GFS Day 26   & 0.43  & 0.09     & 0.48  \\
    \bottomrule
  \end{tabular}
\end{table}

\begin{figure*}[t]
  \centering
  \includegraphics[width=\textwidth]{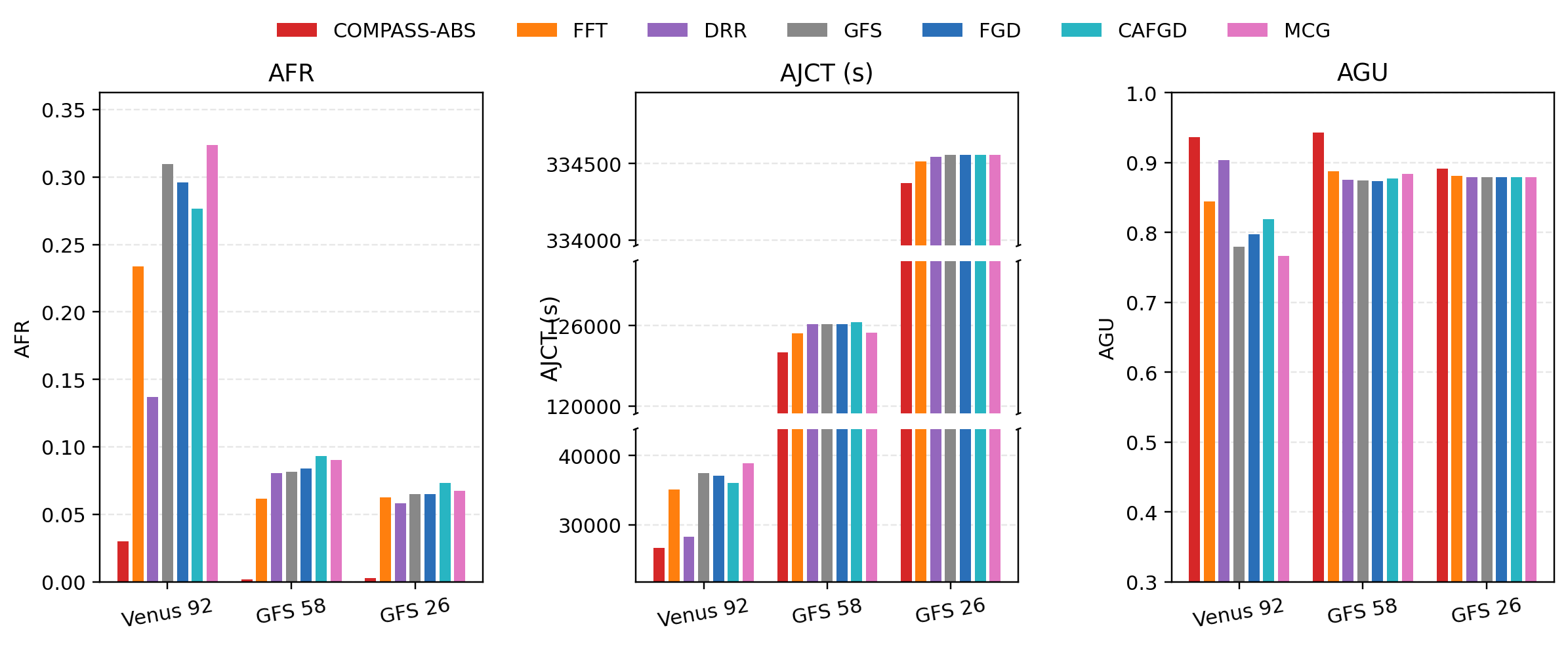}
  \caption{Robustness to workload composition on three representative days (Venus Day~92, GFS Day~58, GFS Day~26)}.
  \label{fig:workload-composition-robustness}
\end{figure*}

\subsection{Workload Composition Condition: Coverage and Robustness on Violating Days}
\label{sec:wcc-coverage}

The performance guarantee of COMPASS-ABS rests on Theorem~\ref{thm:compactness}, which assumes the cluster state satisfies the Workload Composition Condition (WCC) at every scheduling tick. We therefore test how often the WCC holds and whether COMPASS-ABS keeps its advantage on days where it does not hold throughout. We replay every Venus and GFS day, measure each day's WCC coverage $\rho$, the fraction of its scheduling horizon during which the WCC is satisfied, bucket the days by $\rho$, and for each bucket and each metric in $\{\mathrm{AFR}, \mathrm{AJCT}, \allowbreak\mathrm{AGU}\}$ count the days on which COMPASS-ABS ranks top-1 among the seven evaluated schedulers (Table~\ref{tab:wcc-top1-ranking}).

The WCC is empirically dominant on both traces. On Venus, 487 of the 564 replayed days (86.3\%) satisfy it throughout the entire horizon, and a further 51 days (9.0\%) satisfy it for at least 90\% of the horizon; on GFS it holds across the full horizon on all 552 days. The WCC-satisfying regime is thus the principal operating regime of production deployments rather than a corner case.

Even on the Venus days where the WCC does not hold throughout, the top-1 frequency of COMPASS-ABS closely tracks that on the $\rho = 1$ days. Across the five coverage buckets with $\rho < 1$, it leads on AFR for 100\% of the days in the three lower-coverage buckets $[0, 0.1)$, $[0.1, 0.7)$, and $[0.7, 0.9)$ and for at least 88\% in the two higher-coverage buckets, while on AJCT and AGU it leads for at least 64\% in every WCC-violating bucket and for 100\% in the lowest-coverage bucket $[0, 0.1)$. Aggregated across all 564 Venus days, the top-1 frequency reaches 94.1\% on AFR, 80.3\% on AJCT, and 81.9\% on AGU, and on GFS 100\%, 97.3\%, and 95.9\% respectively. COMPASS-ABS thus retains its advantage on all three metrics even outside this regime.

\subsection{Robustness to Migration Cost}
\label{sec:robustness-to-migration-cost}

Empirical migration latencies in DLT clusters span the range of
1s to 10s, depending on the checkpoint-and-restore implementation,
the network topology and the model
size~\cite{xiao2018gandiva,mo2025fft,ye2024survey,han2020marble}; on our own physical cluster the average migration latency we measure is approximately 8s. We
therefore verify that COMPASS-ABS maintains its superiority across the realistic range by sweeping the per-migration cost parameter
$c \in \{1, \ldots, 10\}$~s on the GFS and Venus traces.

\begin{figure}
  \centering
  \includegraphics[width=\columnwidth]{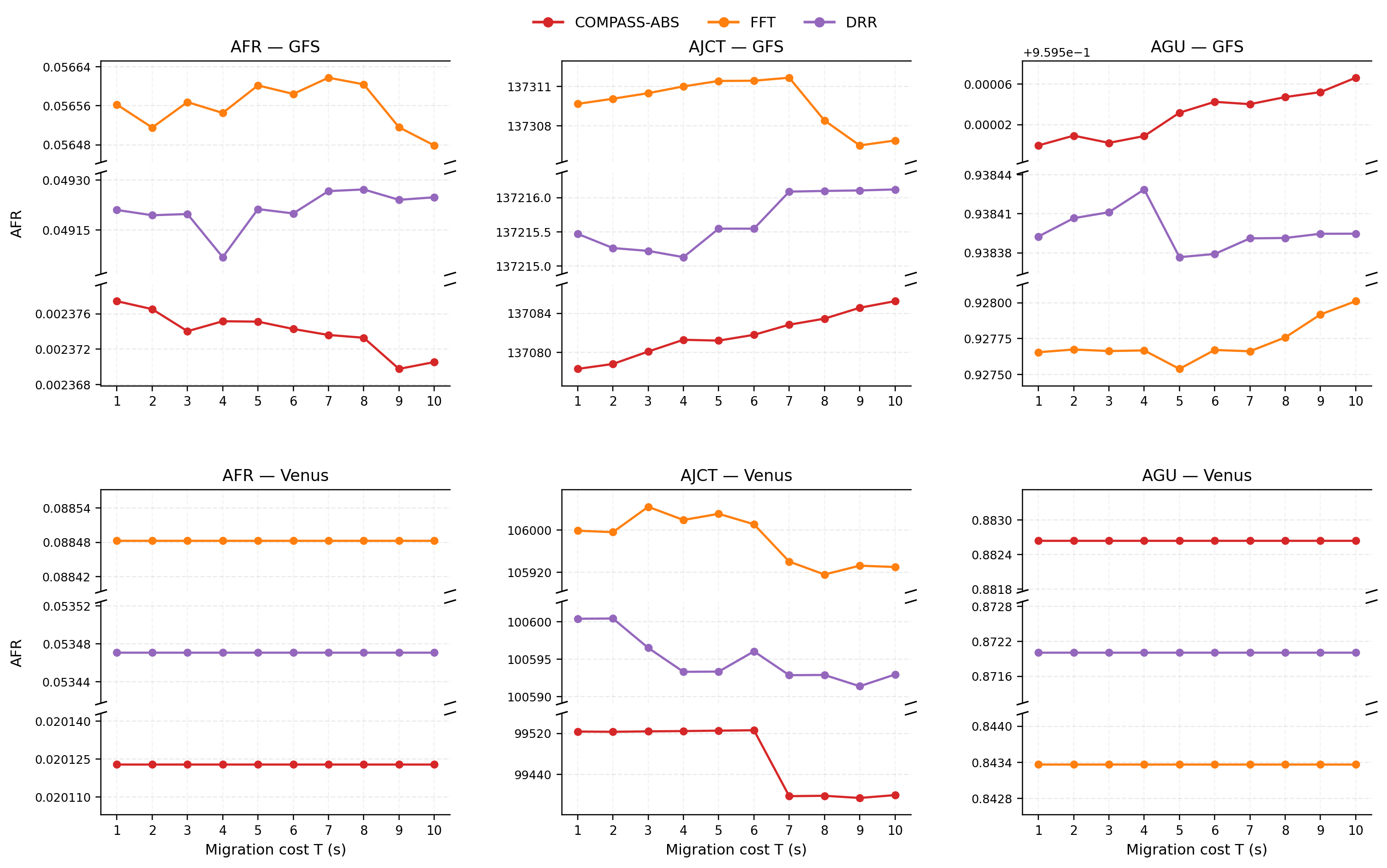}
  \caption{Robustness to per-migration cost on the GFS and Venus trace,
$c \in [1, 10]$~s, $\eta = 0.90$.}
  \label{fig:migration-cost-robustness}
\end{figure}

As shown in Fig.~\ref{fig:migration-cost-robustness}, every one of the six panels exhibits essentially flat curves as the per-migration cost $c$ sweeps from 1~s to 10~s, because the three migration-capable schedulers each issue at most a small number of migrations per job and the cumulative cost contribution to any metric therefore stays well below one second per job even at $c = 10$~s. More importantly, COMPASS-ABS attains the lowest AFR, the lowest AJCT and the highest AGU at every $c$ in the swept range on both GFS and Venus, including the 8~s neighborhood that corresponds to the latency we measure on our own cluster, so the end-to-end advantage of COMPASS-ABS is insulated against any realistic increase in the per-migration cost.

\begin{table}
  \caption{Top-1 ranking frequency of COMPASS-ABS across WCC coverage $\rho$ buckets, grouped by $(\text{day},\eta)$
  pairs.}
  \label{tab:wcc-top1-ranking}
  \footnotesize
  \setlength{\tabcolsep}{3pt}
  \renewcommand{\arraystretch}{1.0}
  \begin{tabular}{llrccc}
    \toprule
    Trace & $\rho$ Bucket & $|D|$ & AFR & JCT & AGU \\
    \midrule
    \multirow{16}{*}{Venus}
      & \multirow{2}{*}{$1.0$}                  & \multirow{2}{*}{487} & 459      & 383      & 404      \\
      &                                         &                      & (94.3\%) & (78.6\%) & (83.0\%) \\
      & \multirow{2}{*}{$[0.99,\,1)$}           & \multirow{2}{*}{26}  & 24       & 18       & 20       \\
      &                                         &                      & (92.3\%) & (69.3\%) & (76.9\%) \\
      & \multirow{2}{*}{$[0.9,\,0.99)$}         & \multirow{2}{*}{25}  & 22       & 16       & 16       \\
      &                                         &                      & (88.0\%) & (64.0\%) & (64.0\%) \\
      & \multirow{2}{*}{$[0.7,\,0.9)$}          & \multirow{2}{*}{11}  & 11       & 8        & 9        \\
      &                                         &                      & (100\%)  & (72.7\%) & (81.8\%) \\
      & \multirow{2}{*}{$[0.1,\,0.7)$}          & \multirow{2}{*}{10}  & 10       & 8        & 8        \\
      &                                         &                      & (100\%)  & (80.0\%) & (80.0\%) \\
      & \multirow{2}{*}{$[0,\,0.1)$}            & \multirow{2}{*}{5}   & 5        & 5        & 5        \\
      &                                         &                      & (100\%)  & (100\%)  & (100\%)  \\
      & \multirow{2}{*}{\textbf{Total}}         & \multirow{2}{*}{\textbf{564}} & \textbf{531}      & \textbf{453}      & \textbf{462}      \\
      &                                         &                      & \textbf{(94.1\%)} & \textbf{(80.3\%)} & \textbf{(81.9\%)} \\
    \midrule
    \multirow{2}{*}{GFS}
      & \multirow{2}{*}{$1.0$}                  & \multirow{2}{*}{552} & 552      & 537      & 529      \\
      &                                         &                      & (100\%)  & (97.3\%) & (95.9\%) \\
    \bottomrule
  \end{tabular}
\end{table}

\subsection{Case Study of Fragmentation Dynamics}
\label{sec:case-studies}

We zoom into the 21:00--22:30 window of GFS day 113 under $\eta = 0.95$ to expose the time-resolved fragmentation mechanism, with Fig.~\ref{fig:case-sif}, Fig.~\ref{fig:case-util} and Fig.~\ref{fig:case-idle} reporting the dynamic $N \cdot \mathrm{SIF}$ (the total number of avoidable partial nodes), GPU utilisation and fully-empty node counts respectively. All six practical baselines remain in continuous blocking throughout the window, with five of them attributing $100\%$ of the blocking to fragmentation and DRR still $94.6\%$, as verified by the aggregate idle GPU count strictly exceeding the head-of-line demand. Over $90\%$ of the blocked jobs carry $g_i = 8$ with $w_i > 30$, so the dominant blocking factor is the requirement for at least $w_i$ fully empty 8-GPU nodes.

On the fragmentation side, the five non-migration baselines settle into a wide 60--80 partial-node band while FFT confines itself to a 25--50 band through periodic global migration, but both ranges leave only 20--30 fully empty nodes against the $w_i > 30$ contiguous demand and trap GPU utilisation within 0.8--0.9 for the entire window. COMPASS-ABS keeps $N \cdot \mathrm{SIF}$ bounded by 2 throughout, so its total blocking accumulates to only about 10 minutes, and although $72.9\%$ of that blocking is technically fragmentation-induced, the compact ABS layout together with its slot-pool migration mechanism dispatches every such blockage within seconds and never lets fragmentation translate into long-period queue blocking or low utilisation.

\begin{figure}
  \centering
  \includegraphics[width=\columnwidth]{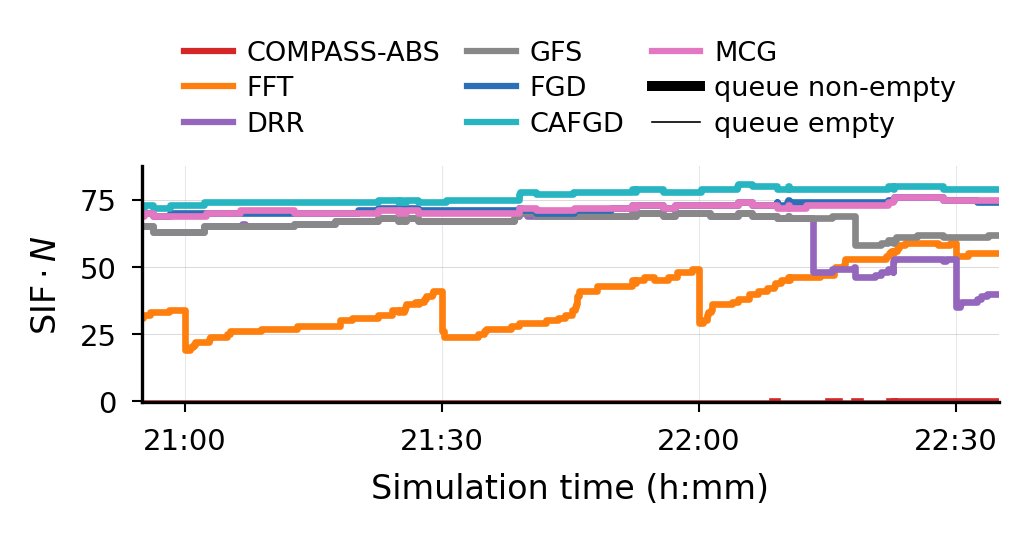}
  \caption{Dynamic number of avoidable partial nodes ($N \cdot \mathrm{SIF}$) by seven schedulers on GFS day 113 ($\eta = 0.95$).}
  \label{fig:case-sif}
\end{figure}

\begin{figure}[b]
  \centering
  \includegraphics[width=\columnwidth]{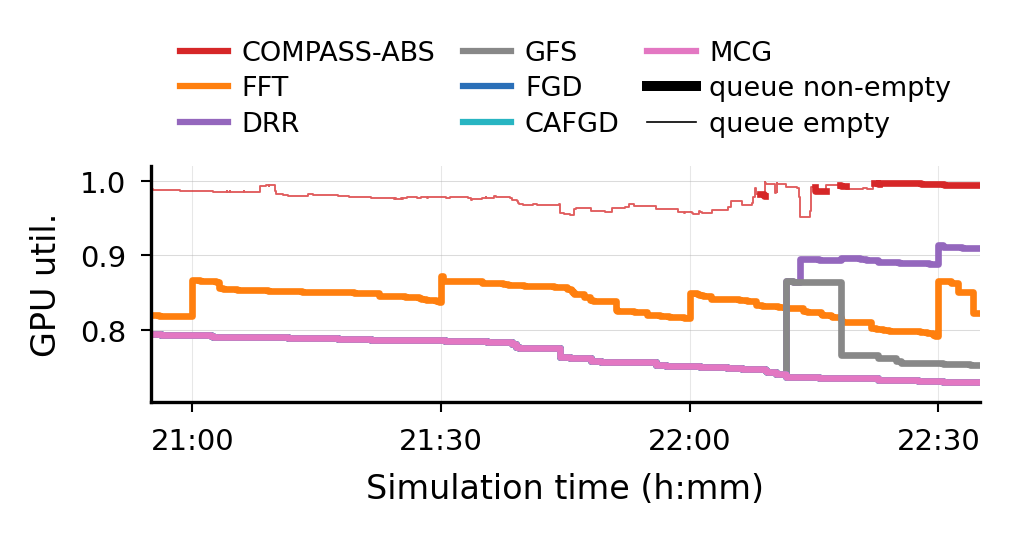}
  \caption{Dynamic GPU utilization by seven schedulers on GFS day 113 ($\eta = 0.95$).}
  \label{fig:case-util}
\end{figure}

\begin{figure}
  \centering
  \includegraphics[width=\columnwidth]{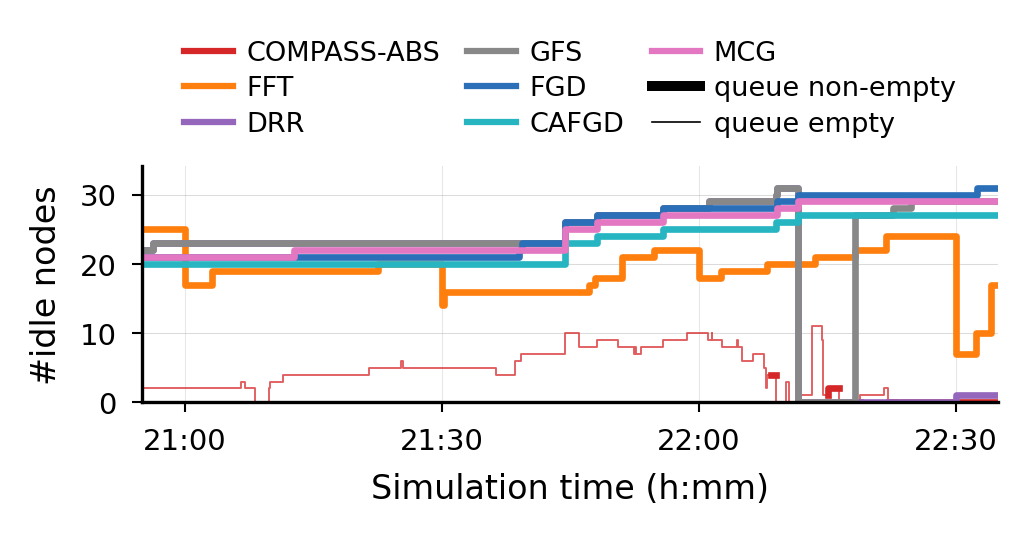}
  \caption{Dynamic free nodes by seven schedulers on GFS day 113 ($\eta = 0.95$).}
  \label{fig:case-idle}
\end{figure}

%% file: section7.tex
\section{Related Work}

Most prior work on DLT job scheduling overlooks the quantification of resource fragmentation, asserting that they mitigate fragmentation via tight packing, without providing a concrete metric\cite{li2023lyra, wu2023tgs, rajasekaran2024cassini}. FGD\cite{weng2023beware} was the first to propose a statistical measure that estimates the expected instance capacity that the remaining resources can accommodate, given prior knowledge of the resource demand distribution. Subsequent works refined this formulation on top of FGD: CAFGD\cite{lao2025cafgd} decomposed the demand probability into weighted long-term and short-term distributions, while MCG\cite{wu2025mcgsched} incorporated load balance into the fragmentation metric. However, all of these approaches rely on historical workload information.

Numerous studies aimed to reduce DLT job completion time and improve resource utilization by mitigating fragmentation. Existing schedulers can be broadly categorized into two types of approaches. Non-migration schedulers: These methods formulated scheduling as a multi-dimensional bin-packing problem\cite{verma2014evaluating,grandl2014multiresource,huang2015multiresource,zhao2022multiresource,chen2023deepboot}. ElasticFlow\cite{gu2023elasticflow}, Synergy\cite{mohan2022synergy}, and GFS\cite{duan2026gfs} adopted a best-fit strategy that dynamically places jobs on the node leaving the fewest idle GPUs. FGD\cite{weng2023beware}, CAFGD\cite{lao2025cafgd}, and MCG\cite{wu2025mcgsched} scheduled jobs based on their respective fragmentation measures, preferring placements that minimize the increase between pre- and post-scheduling fragmentation. These methods cannot reduce fragmentation that has already accumulated in the cluster. Migration-supporting schedulers: To address accumulated fragmentation, schedulers such as Gavel\cite{narayanan2020gavel}, Sia\cite{jayaram2023sia}, and RASA\cite{chen2024rasa} modeled the problem as a linear program computed by numerical solvers. However, the resulting time complexity and additional latency induced by frequent migrations are prohibitive for real-time scheduling in large-scale clusters\cite{woeginger1997noptas}. To improve practicality, Hops\cite{wang2024hops}, FFT\cite{mo2025fft} and DRR\cite{wu2025defrag} employed a round-based migration strategy that performs global defragmentation by relocating running jobs to placements that incur less fragmentation. Nevertheless, these methods cannot correct fragmentation introduced within a round, leaving the cluster in a degraded state until the next defragmentation cycle.

%% file: section8.tex
\section{Conclusion}

In this paper, we have presented Scheduler-Induced Fragmentation (SIF) for quantifying resource fragmentation together with COMPASS-ABS, a scheduler that reduces fragmentation in shared GPU clusters for DLT jobs. From the partial-nodes perspective, SIF removes the dependence on prior knowledge of the workload trace that earlier statistical metrics require, and isolates the portion of observed fragmentation resulting from scheduling policy. Building on this measure, COMPASS-ABS confines the cluster state to a structured feasible domain called the Anchor-Based Space (ABS), whose construction fully exploits the alignment between the GPU-demand profile of DLT instances and the cluster topology, so that the accompanying COMPASS algorithm can dynamically maintain compactness inside ABS and keep SIF uniformly bounded by $2/N$ whenever the Workload Composition Condition holds. We validate our scheduler's performance through large-scale simulation on production traces and through physical-cluster deployment with a synthetic DLT workload, both of which show that COMPASS-ABS attains higher GPU utilization and substantially shorter job completion time in the queue than the state-of-the-art baselines even when WCC is violated.

Looking ahead, we identify three promising directions for extending COMPASS-ABS. First, we plan to generalize our scheduler to jointly handle both deep learning training and inference jobs in a shared GPU cluster, where the heterogeneous latency requirements and resource profiles of the two workload types pose additional scheduling challenges. Second, when migrating jobs to amend fragmentation, the current design treats all jobs uniformly; a natural extension is to incorporate job-level priorities into the migration decision, selecting victims in a manner that respects Service Level Objectives (SLOs) and avoids penalizing high-priority workloads. Moreover, we further discuss how to extend our scheduler to the next-generation clusters with larger NVLink domains in Appendix~\ref{app:sec:nvl}.

%% file: Proof/main_proofs.tex
\section{Compactness of Best-Fit on Power-of-Two Workloads}
\label{app:sec:bf-power-of-two}

This appendix formalises Observation~1 of Section~\ref{sec:design-rationale}.
We consider the restricted setting in which every multi-GPU instance has
demand $g_i \in \{2, 4, 8\}$, i.e.\ the active workload is contained in
$\mathcal{I}^{\mathrm{FF}}$, and the cluster runs a stripped-down variant
of Algorithm~\ref{alg:compass}: each arrival is handled by a single
$\mathsf{Place}$ call that uses best-fit (no $\mathsf{Compact}$ pass on
arrival), and each departure is handled by per-instance $\mathsf{Remove}$
calls followed by a single $\mathsf{Compact}$ pass on the departure side
only. We show that this minimal policy is already sufficient to enforce
$\mathrm{SIF}^{\pi}(t) \le 2/N$ at all times, and we exhibit a finite
event sequence that reaches the bound.

\subsection{Setup and statement}
\label{app:subsec:bf-setup}

Let the active workload consist of items of size $g \in \{2, 4, 8\}$ that
arrive and depart online, and let each node be a bin of capacity $G = 8$.
We use $\Phi^{\pi}(t) := \nu(P^{\pi}(t))$ for the partial-node count
under policy $\pi$ at time $t$, with $F^{\pi}(t) = \Phi^{\pi}(t)/N$ as in
Section~\ref{sec:decomposing}. The free space of a partial bin lies in
$\{2, 4, 6\}$, because used capacity attainable by sums of items in
$\{2,4,8\}$ that strictly under-fill $G = 8$ is one of $\{2, 4, 6\}$.

\begin{proposition}[Best-Fit Compactness on Power-of-Two Workloads]
\label{prop:bf-power-of-two}
Let the policy place every arriving instance via best-fit, i.e.\ on the
node leaving the smallest remaining capacity after placement and opening
a fresh node when no existing node can host the instance, and let every
departure be followed by a single $\mathsf{Compact}$ pass. If every
active instance has $g_i \in \{2, 4, 8\}$, then for every $t \ge 0$ the
state $\sigma(t)$ produced by the policy satisfies
\[
  \Phi^{\pi}(t) \;\le\; 2 \qquad \text{and} \qquad
  \mathrm{SIF}^{\pi}(t) \;\le\; \frac{2}{N},
\]
and both inequalities are tight in the sense that a finite arrival
sequence reaches $\Phi^{\pi}(t) = 2$.
\end{proposition}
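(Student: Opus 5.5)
The plan is to prove an invariant on the occupancy profile of the bins, namely that at most one partial node has free space $2$, at most one has free space $4$, and at most one has free space $6$. This invariant will then be sharpened to $\Phi^{\pi}(t)\le 2$. Since every item size divides $G=8$, the residual capacity of any node lies in $\{0,2,4,6,8\}$. Best-fit on an arrival of size $g$ therefore picks a node whose residual $r\ge g$ is minimal. The key arithmetic fact is that for sizes in $\{2,4,8\}$, best-fit never creates a new partial node when one with residual exactly $g$ exists. In addition, it fills the node of smallest feasible residual, which keeps residuals ``nested''.

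The first step is the arrival case, by induction on events. Assume the invariant before the arrival. An $8$-item opens an empty node, which becomes full, so the partial set is unchanged. A $2$-item goes to the node with the smallest residual in $\{2,4,6\}$. This either closes a node or shifts one residual class down by $2$, and it opens a fresh node (residual $6$) only when no partial node exists at all. A $4$-item goes to the node with residual $4$ if present (closing it), else to residual $6$ (moving it to class $2$), else it opens a node (class $4$). I will check case by case that no class ever acquires two members. The only delicate case is a $4$-item moving a residual-$6$ node into class $2$ while a class-$2$ node already exists. That cannot happen, because best-fit would have preferred a residual-$4$ node, and if none exists, the standing configuration $\{2,6\}$ becomes $\{2,2\}$. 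This is exactly the configuration that must be excluded, so here I need the stronger invariant: \emph{the partial nodes have residual multiset contained in one of $\{r\}$, $\{2,4\}$, $\{2,6\}$ with total residual sum $<8$}, or an analogous list. The case $\{2,6\}$ plus a $4$-item yields $\{2,2\}$, which has two partial nodes, and that is still allowed since the target bound is $\Phi\le 2$. So the invariant I will actually carry is ``at most two partial nodes'', with the refinement that the configuration is one of a finite list $\mathcal{L}$. I will then verify closure of $\mathcal{L}$ under each arrival size by a finite table.

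The second step is the departure case. A departure can create arbitrary many partial nodes, for example by removing many $4$-items from distinct full nodes. Here the argument relies on $\mathsf{Compact}$. The plan is to show that after the w4 pass and the w2 pass of $\mathsf{BestFitDrain}$, the residual configuration again lies in $\mathcal{L}$. The argument is as follows. Width-$4$ slots are drained from the sparsest hosts into the hosts with the least room, so at the end at most one node holds a non-full set of w4 items relative to its room. The w2 pass then merges $2$-items likewise. Because all sizes divide each other, a drained source empties completely whenever total free capacity allows. This is a standard argument for divisible bin packing: with divisible sizes, first-fit-decreasing style consolidation leaves at most one non-full bin per size class. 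I expect this step to be the main obstacle. The loop in $\mathsf{Compact}$ has early \textbf{break}s, and interactions between the w4 and w2 passes (a node with one $4$-slot and one $2$-slot) must be shown not to leave two stuck partial nodes. I would handle it by a potential argument. If two partial nodes remain after both passes, their combined load is $\le 8$ with divisible items, so the w2 or w4 pass would have found a destination, which contradicts termination.

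Finally, since $\Phi^{\star}(t)\ge 0$, we get $\mathrm{SIF}^{\pi}(t)=(\Phi^{\pi}-\Phi^{\star})/N\le 2/N$. For tightness I will give the sequence: arrive a $2$-item, then an $8$-item is irrelevant, so instead arrive $2$, then $4$ (best-fit puts it on the residual-$6$ node, giving residual $2$). The $\{2,2\}$ state via $\{2,6\}$ requires the earlier state $\{2,6\}$, reached by arriving $4$, $2$, $2$ (residual $2$), then a $4$ that opens a new node (residual $4$), then a $2$ (best-fit onto the residual-$2$ node, closing it). I will search the small arrival tree for a sequence with two partial nodes whose packing could be one node ($\Phi^\star=1$ or $0$), e.g.\ two nodes each with residual $4$ containing one $4$-item each is impossible under best-fit. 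I will therefore look for a configuration like $\{6,2\}$ holding items totalling $8$, giving $\Phi^{\pi}=2$, and accept tightness of $\Phi^{\pi}=2$ as stated.
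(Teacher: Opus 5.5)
Your two-part structure (an arrival invariant checked by a finite closure table, then restored after each departure by $\mathsf{Compact}$) is the same as the paper's Part~1/Part~2 split. However, both halves of your argument have a genuine gap.

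\textbf{Arrival step.} You end up carrying ``at most two partial nodes'' and explicitly admit the residual pattern $\{2,2\}$. That invariant is not closed under arrivals. From free spaces $\{2,2\}$, a single $4$-instance finds no node with free space at least $4$ and opens a fresh one, giving three partial nodes. Likewise, a $4$-item turns $\{2,6\}$ into $\{2,2\}$, and a $2$-item turns $\{4,6\}$ into $\{2,6\}$. The family that is closed under best-fit arrivals is $\{\emptyset,\{2\},\{4\},\{6\},\{2,4\}\}$ (one may add $\{4,4\}$). This is what the paper's Lemmas~B.1--B.3 (free-value uniqueness and free-$6$ isolation) pin down. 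What then must be proved is that every departure followed by $\mathsf{Compact}$ lands back inside this family.

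\textbf{Departure step.} This is exactly where your potential argument breaks. Each $\mathsf{BestFitDrain}$ pass takes sources and destinations only from $\mathcal{N}=\{n: 0<\mathrm{count}_{\mathcal{C}}(n)<\mathrm{cap}_{\mathcal{C}}(n)\}$. So a node whose only slot has width $4$ is never a destination of the w2 pass, and a node holding only width-$2$ slots is never a destination of the w4 pass. Two leftover partial nodes of total load at most $8$ therefore do not contradict termination. (Leftovers need not have total load at most $8$ anyway: $\{4\}$ and $\{4,2\}$ both survive.)

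Concretely, let four jobs arrive in turn: one with a single $4$-instance, one (call it $J$) with two $4$-instances, one with two $2$-instances, and one with three $2$-instances. Best-fit produces nodes $\{4,4\}$, $\{4,2,2\}$, $\{2,2,2\}$. When $J$ departs, the nodes are $\{4\}$, $\{2,2\}$, $\{2,2,2\}$. The w4 pass has a single candidate and does nothing. The w2 pass moves one $2$-slot from $\{2,2\}$ into $\{2,2,2\}$ and then breaks, leaving $\{4\}$ and $\{2\}$ with free space $4$ and $6$. A one-instance $2$-job then goes to the $\{4\}$ node. Next, a job with two $4$-instances puts one on the $\{2\}$ node and opens a fresh node for the other. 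Now three nodes are partial, while the active items (three $4$s and six $2$s, total $24$) fit into three full bins. Hence $\Phi^{\star}=0$ and $\mathrm{SIF}^{\pi}=3/N$ for any $N\ge 4$.

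So with $\mathsf{Compact}$ exactly as specified, your Step~2 cannot be completed, and the stated bound itself fails. $\mathsf{Compact}$ would have to be changed (e.g.\ so the w2 pass may also use nodes holding only width-$4$ slots as destinations) and the argument redone. The paper's own Part~2 has the same hole: it appeals to Lemma~A.4, whose last step treats the surviving $\{4\}$-node as an eligible width-$2$ destination. The pseudocode does not allow this, because that node holds no width-$2$ slot.

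Your tightness search is also left unfinished; the arrivals $4,2,4$ in an empty cluster already give $\Phi^{\pi}=2$.
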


\subsection{Proof of Proposition~\ref{prop:bf-power-of-two}}
\label{app:subsec:bf-proof}

We bound $\Phi^{\pi}(t)$ separately on the two event types and combine
the bounds. Throughout, ``bin'' is interchangeable with ``node'' and
``item'' with ``instance''.

\paragraph{Part 1: best-fit on arrival leaves at most two partial bins.}
We first show that immediately after any arrival event the cluster
state has at most two partial bins.

\subparagraph*{Lemma B.1 (free-value uniqueness).}
At any time during the arrival pass, at most one partial bin has
remaining free space $r$ for each $r \in \{2, 4, 6\}$.

\begin{proof}
Suppose for contradiction that two partial bins $A$ and $B$ have free
space $r$ at the same time. Let $B$ be the bin whose free space last
became $r$, and consider the placement event that produced
$\mathrm{free}(B) = r$. Just before that placement, $\mathrm{free}(A) = r$
already held by hypothesis. The placement put an item of size $s$ into
$B$, with $s \le \mathrm{free}(B)_{\mathrm{before}}$ and
$\mathrm{free}(B)_{\mathrm{before}} - s = r$, so
$\mathrm{free}(B)_{\mathrm{before}} \ge s$. If $B$ was a fresh bin, then
$\mathrm{free}(B)_{\mathrm{before}} = G = 8$ and $s = G - r > 0$; bin
$A$ also had free space $r \ge s$, since $r = G - s$ and $G \ge s$
forces $r \ge 0$, and we ruled out $r = 0$ because $A$ is partial.
Concretely, for every reachable $r \in \{2, 4, 6\}$ the item size
$s = G - r \in \{6, 4, 2\}$, but only $s \in \{2, 4\}$ are admissible
power-of-two items (size $6$ is not in the workload), so the only case
is $s \in \{2, 4\}$, in which $A$ with $\mathrm{free}(A) = r$ also has
$\mathrm{free}(A) \ge s$ and best-fit would have picked $A$ over the
fresh $B$ because the fresh bin has $G = 8 > r$.  If $B$ was not fresh, then $\mathrm{free}(B)_{\mathrm{before}} > r$, so
$\mathrm{free}(A) = r < \mathrm{free}(B)_{\mathrm{before}}$. Best-fit's
least-fit rule would then still have preferred $A$ over $B$, because $A$
leaves the smaller free space, provided $A$ can host the item. The
admissibility condition $\mathrm{free}(A) \ge s$ is exactly $r \ge s$. It
holds because the placement into $B$ required
$\mathrm{free}(B)_{\mathrm{before}} \ge s$, and substituting
$r = \mathrm{free}(B)_{\mathrm{before}} - s$ rewrites $r \ge s$ as
$\mathrm{free}(B)_{\mathrm{before}} \ge 2 s$.
  The only case where
$\mathrm{free}(A) < s$ is when $r < s$, but this is already excluded by
the construction $\mathrm{free}(B)_{\mathrm{before}} = r + s$ and the
item size $s$ being one of $\{2, 4\}$ which gives $r + s \le G$ hence
$r \le G - s$. In every admissible sub-case best-fit's tie-breaking
prefers $A$, contradicting the placement having gone to $B$.
\end{proof}

\subparagraph*{Lemma B.2 (free-six isolation).}
If some partial bin has free space $6$, then no other partial bin
exists at the same time.

\begin{proof}
A bin with $\mathrm{free} = 6$ contains exactly one size-$2$ item.
Consider the most recent placement that put this size-$2$ item into the
bin. The bin was empty just before the placement, because no
other item-size combination from $\{2, 4, 8\}$ yields used capacity
$2$. Best-fit opens a fresh empty bin only when no existing partial
bin admits the item. The arriving item has size $2$, so an admissible
partial bin would only require $\mathrm{free} \ge 2$, which every
partial bin satisfies because every partial free value lies in
$\{2, 4, 6\}$. Hence no partial bin existed at the moment the
size-$2$ item was placed in the fresh bin, and the new $\mathrm{free} =
6$ bin is the sole partial bin at that instant.

Each subsequent arrival event either (i) places a size-$2$ item, which
best-fit routes into the existing $\mathrm{free} = 6$ bin and reduces
its free space to $4$, removing it from the $\mathrm{free} = 6$ class;
or (ii) places a size-$4$ item, which best-fit routes into the existing
$\mathrm{free} = 6$ bin and reduces its free space to $2$; or (iii)
places a size-$8$ item, which opens a fresh bin that immediately
becomes full and creates no new partial bin. In none of these cases is
a second partial bin created while the original $\mathrm{free} = 6$ bin
still has free space $6$, so the isolation invariant is preserved
until the next departure event.
\end{proof}

\subparagraph*{Lemma B.3 (two-partial bound).}
At any time immediately after an arrival placement, the cluster has at
most two partial bins, and the only reachable two-partial configuration
has free spaces $\{2, 4\}$.

\begin{proof}
Combining Lemmas~B.1 and~B.2, the set of free-space values present
across partial bins is a subset of $\{2, 4, 6\}$ in which each value
appears at most once, and the value $6$ cannot coexist with any other
partial bin. Hence the only feasible multisets of free-space values
across partial bins are $\emptyset$, $\{2\}$, $\{4\}$, $\{6\}$,
$\{2, 4\}$, $\{2, 6\}$, $\{4, 6\}$ from cardinality counting, and the
last two are forbidden by Lemma~B.2. The remaining configurations have
at most two partial bins.
\end{proof}

\paragraph{Part 2: Compact on departure leaves at most one partial bin.}

Departures invoke a single $\mathsf{Compact}$ pass after the per-instance
$\mathsf{Remove}$ calls have been applied. Under the present restriction
$\mathcal{I}^{=1} = \emptyset$, $\mathcal{I}^{\mathrm{FUF}} = \emptyset$,
so the active 1-GPU population $n_1(t)$ is identically zero and the
total slot vacancy aggregated across anchored fragmentation-unfriendly
instances is identically zero. The Workload Composition Condition
\eqref{eq:workload-cond-frac} therefore reduces to $0 \ge 0$ and holds
trivially at every time. Lemma~A.3 of
Appendix~\ref{app:sec:proofs} then contributes zero non-fully-packed
pool nodes because no pool nodes exist, and Lemma~A.4 of
Appendix~\ref{app:sec:proofs} contributes at most one non-fully-packed
slotted node from the width-$4$ and width-$2$ $\mathsf{BestFitDrain}$
passes. The post-departure state therefore satisfies $\Phi^{\pi}(t)
\le 1$.

\paragraph{Combining the two parts.}

The cluster state changes only at event boundaries. After every
arrival event Part~1 gives $\Phi^{\pi}(t) \le 2$, and after every
departure event Part~2 gives $\Phi^{\pi}(t) \le 1 \le 2$. The bound
$\Phi^{\pi}(t) \le 2$ therefore holds at every event boundary and
extends to every continuous time instant between consecutive events,
so
\[
  \begin{aligned}
    &\Phi^{\pi}(t) \;\le\; 2, \\
    &F^{\pi}(t) \;=\; \frac{\Phi^{\pi}(t)}{N} \;\le\; \frac{2}{N}, \\
    &\mathrm{SIF}^{\pi}(t) \;\le\; F^{\pi}(t) \;\le\; \frac{2}{N}.
  \end{aligned}
\]
for all $t \ge 0$, where the last inequality uses the non-negativity
of the inherent fragmentation $F^{\star}(t)$ from
Section~\ref{sec:decomposing}.

\paragraph{Tightness.}

The bound $\Phi^{\pi}(t) = 2$ is realised by the following arrival
sequence on an otherwise fully-packed cluster. Suppose all nodes other
than one are fully packed and the remaining node hosts a single size-$4$
anchor together with one size-$2$ anchor, leaving free space $2$. A
DLT job composed of three size-$4$ instances now arrives. Best-fit
considers each arriving instance in turn. The first instance encounters
free space $2$ on the partial node, which does not admit a size-$4$
item, so a fresh node is opened and the first instance is placed there,
producing a new partial node with free space $4$. The second instance
sees free space $2$ on the original partial node and free space $4$
on the just-opened node, prefers the smaller fitting capacity, and is
placed onto the just-opened node, which becomes fully packed. The third
instance again finds the original partial node uninhabitable for a
size-$4$ item and the now-full node also unable to host it, so a
further fresh node is opened with free space $4$. The cluster now
contains two partial nodes, one with free space $2$ and one with free
space $4$, matching the only feasible two-partial configuration
identified in Lemma~B.3 and witnessing $\Phi^{\pi}(t) = 2$ together
with $\mathrm{SIF}^{\pi}(t) = 2/N$. This completes the proof of Proposition~\ref{prop:bf-power-of-two}.
\qed

\section{Proofs for the Theoretical Guarantees of COMPASS-ABS}
\label{app:sec:proofs}

This appendix collects the detailed proofs of the three results stated in
Section~\ref{sec:guarantees}, namely the unconditional state-space
invariance in Proposition~\ref{prop:invariance}, the conditional
$\mathrm{SIF} \le 2/N$ compactness bound in Theorem~\ref{thm:compactness},
and the per-operator complexity in Proposition~\ref{prop:complexity}. We
keep the notation introduced in Section~\ref{sec:state-space} for the
ABS domain $\Sigma$ and in Section~\ref{sec:scheduling-policy} for the
three operators $\mathsf{Place}$, $\mathsf{Remove}$, and $\mathsf{Compact}$.

\subsection{Proof of Proposition~\ref{prop:invariance} (State-Space Invariance)}
\label{app:subsec:invariance}

We prove that every event applied by Algorithm~\ref{alg:compass} maps
$\Sigma$ into $\Sigma$, so that by induction every state generated from
an initial $\sigma_0 \in \Sigma$ remains in $\Sigma$. Since the algorithm
decomposes each job-level event into a finite sequence of per-instance
operator calls together with at most one $\mathsf{Compact}$ call, it
suffices to show that each operator branch preserves $\Sigma$ when
applied to a state already in $\Sigma$.

\paragraph{Place preserves $\Sigma$.}
Consider an arrival $i$ with demand $g_i$ applied to $\sigma \in \Sigma$.
We argue on the branches of $\mathsf{Place}$.

If $g_i \ge 2$, the operator selects a width $w = w^\star(g_i) \in
\{2,4,8\}$ and a host $n^\star$ such that either $n^\star$ is already
slotted with $G - \sum_{s \in S(n^\star)} w(s) \ge w$, or $n^\star$ was
empty and is promoted to slotted with an empty profile. In the first
case, appending a slot of width $w$ to $S(n^\star)$ produces a new slot
multiset whose total width is at most $G$ and whose constituents lie in
$\{2,4,8\}$, hence the new $S(n^\star)$ remains in $\mathsf{SlotConf}$
as defined in \eqref{eq:slotconf}. In the second case, the post-event
$S(n^\star) = \{w\}$ trivially lies in $\mathsf{SlotConf}$. The anchor of
the new slot is set to $i$ so that $w(s^\star) = w^\star(g_i) =
w^\star(g_{a(s^\star)})$, which matches the constraint of
Definition~\ref{def:anchor}. The associated filler set is initialised to
the up-to-$r$ migrated 1-GPU instances with $r = w - g_i$, so that
$|F(s^\star)| \le r = w(s^\star) - g_{a(s^\star)}$ and the capacity
invariant \eqref{eq:capacity-invariant} holds as an inequality. Every
pool node from which a filler is drawn loses one occupant and therefore
preserves $|P(n)| \le G$. No other node is touched.

If $g_i = 1$, the operator places $i$ either into a slot vacancy
$(g_{a(s)} + |F(s)| < w(s))$, or into a non-full pool node $(|P(n)| < G)$,
or onto a freshly promoted pool node initialised with $P(n) = \{i\}$. In
the first case, $|F(s)|$ increases by one and the new value remains
bounded by $w(s) - g_{a(s)}$ by the entry condition $g_{a(s)} + |F(s)|
< w(s)$. In the second case, $|P(n)|$ increases by one and remains
bounded by $G$. In the third case, the new pool node has $|P(n_0)| = 1
\le G$. In all three cases the affected node remains a legal pool node
or slotted node, and no slot configuration leaves $\mathsf{SlotConf}$.

\paragraph{Remove preserves $\Sigma$.}
Consider a departure $i$ applied to $\sigma \in \Sigma$.

If $i$ is an anchor of some slot $s$, the operator dissolves $s$ and
reinserts each filler $f \in F(s)$ by the 1-GPU rule of $\mathsf{Place}$.
Dissolving $s$ removes one element from $S(n)$ for the host node $n$,
so the new $S(n)$ is a sub-multiset of the original $S(n) \in
\mathsf{SlotConf}$, and any sub-multiset of a $\mathsf{SlotConf}$ element
again lies in $\mathsf{SlotConf}$. The reinsertion step has already been
shown to preserve $\Sigma$ in the Place argument above, applied
instance-by-instance to the elements of $F(s)$.

If $i$ is a 1-GPU instance currently attached as a filler in some slot
$s$, the operator simply removes $i$ from $F(s)$, which can only decrease
$|F(s)|$ and therefore preserves \eqref{eq:capacity-invariant} as a
strict inequality. If $i$ is a 1-GPU pool occupant on some pool node
$n$, the operator removes $i$ from $P(n)$, which only decreases $|P(n)|$
and remains bounded by $G$.

\paragraph{Compact preserves $\Sigma$.}
Migrations $(x, n_{\mathrm{dst}})$ executed within the $\mathsf{BestFitDrain}$ loop are performed only when
$\mathrm{room}_{\mathcal{C}}(n_{\mathrm{dst}}) \ge \mathrm{size}(x)$
holds for the current class $\mathcal{C}$. Concretely,
for the pool pass the destination pool node has $|P(n_{\mathrm{dst}})| < G$
before the migration, so $|P(n_{\mathrm{dst}})|$ stays at most $G$ after
the migration. For the width-$4$ pass the destination slotted node has
$\lfloor D(n_{\mathrm{dst}})/4 \rfloor \ge 1$ so that $D(n_{\mathrm{dst}})
\ge 4$, hence appending a width-$4$ slot keeps $\sum_{s \in S(n_{\mathrm{dst}})}
w(s) + 4 \le G$ and the new $S(n_{\mathrm{dst}})$ remains in
$\mathsf{SlotConf}$. The width-$2$ pass is analogous with the inequality
$D(n_{\mathrm{dst}}) \ge 2$. Migrating a slot also detaches its filler set
and reattaches it to the new host, which by the capacity invariant
$g_{a(s)} + |F(s)| \le w(s)$ on the source side remains valid on the
destination side because $w(s)$ does not change under migration. Every
migration therefore preserves $\Sigma$, and so does the entire
$\mathsf{Compact}$ call which is a finite composition of such migrations.

\paragraph{Conclusion.}
By induction on the event sequence, every state produced by
Algorithm~\ref{alg:compass} remains in $\Sigma$, which establishes
Proposition~\ref{prop:invariance}. \qed

\subsection{Proof of Theorem~\ref{thm:compactness} (Compactness under the Workload Composition Condition)}
\label{app:subsec:compactness}

Throughout this subsection we assume that the fraction-form Workload
Composition Condition \eqref{eq:workload-cond-frac} stated inside
Theorem~\ref{thm:compactness} holds at the time $t$ under consideration.
The argument proceeds in three stages. We first translate the fraction
form into a structurally more convenient slot-vacancy form
(Lemma~A.1), then show that this slot-vacancy form implies the
saturation of every anchored fragmentation-unfriendly slot at the end
of each operator call (Lemma~A.2), and finally show that the three
$\mathsf{BestFitDrain}$ passes inside $\mathsf{Compact}$ leave at most
one partial node per node-type (Lemmas~A.3 and~A.4), which after
normalisation by the cluster size $N$ yields the
$\mathrm{SIF}^{\pi}(t) \le 2/N$ bound.

\subsubsection*{Lemma~A.1 (Slot-Vacancy Form of the Workload Composition Condition)}
\label{lem:cond-equiv}

The fraction-form condition \eqref{eq:workload-cond-frac} is
algebraically equivalent to the slot-vacancy form
\begin{equation}
n_1(t) \;\ge\; \sum_{s:\, a(s) \neq \varnothing} \bigl(w(s) - g_{a(s)}\bigr),
\label{eq:workload-cond-abs}
\end{equation}
which asserts that the active 1-GPU population is at least as large as
the total slot vacancy currently exposed by anchored
fragmentation-unfriendly instances.

\begin{proof}
By Definition~\ref{def:slot}, every active anchor $a(s) \in \mathcal{I}^{\ge 2}$
has a slot width $w(s) = w^\star(g_{a(s)})$. Inspecting $w^\star$ in
\eqref{eq:wstar} we obtain $w(s) - g_{a(s)} = 0$ for $g_{a(s)} \in \{2,4,8\}$,
$w(s) - g_{a(s)} = 1$ for $g_{a(s)} = 3$, $w(s) - g_{a(s)} = 3$ for
$g_{a(s)} = 5$, $w(s) - g_{a(s)} = 2$ for $g_{a(s)} = 6$, and
$w(s) - g_{a(s)} = 1$ for $g_{a(s)} = 7$. Summing over all active anchors
and grouping by demand,
\[
\sum_{s:\,a(s) \neq \varnothing} \bigl(w(s) - g_{a(s)}\bigr)
\;=\; n_3(t) + 3\,n_5(t) + 2\,n_6(t) + n_7(t).
\]
Substituting this identity into \eqref{eq:workload-cond-abs} and dividing
both sides by $n_{\mathrm{total}}(t) > 0$ yields
\eqref{eq:workload-cond-frac}, and the implication is reversible since
each step is an equivalence. In what follows we use
\eqref{eq:workload-cond-abs} interchangeably with the original
form.
\end{proof}

\subsubsection*{Lemma~A.2 (Slot Saturation under the Workload Composition Condition)}
\label{lem:saturation}

Under \eqref{eq:workload-cond-abs}, at the end of every event handled
by Algorithm~\ref{alg:compass} (whether a job arrival or a job
departure) the capacity invariant in \eqref{eq:capacity-invariant}
holds with equality for every active fragmentation-unfriendly anchor,
namely
\begin{equation}
g_{a(s)} + |F(s)| \;=\; w(s) \qquad \forall s \text{ with } a(s) \in \mathcal{I}^{\mathrm{FUF}}.
\label{eq:slot-saturation}
\end{equation}

\begin{proof}
By construction $\mathsf{Place}$ pulls $w(s) - g_{a(s)}$ fillers from
the pool whenever a fragmentation-unfriendly anchor is installed, and
$\mathsf{Remove}$ in the 1-GPU branch additionally migrates one
replacement filler from the pool whenever a filler departs from a slot.
Both operators therefore actively maintain the saturation equality
\eqref{eq:slot-saturation} as long as the pool population is non-empty.
The only obstacle is filler shortage at the moment when fillers are
being drawn, and the slot-vacancy form \eqref{eq:workload-cond-abs}
guarantees that the total active 1-GPU population $n_1(t)$ is at least
the total slot vacancy aggregated across all anchored
fragmentation-unfriendly instances at the end of the event. Since
$\mathsf{Compact}$ does not create new slot vacancy and the 1-GPU pool
is internally redistributed across pool nodes without changing $n_1(t)$,
\eqref{eq:workload-cond-abs} carries through to the post-$\mathsf{Compact}$
state. Because Algorithm~\ref{alg:compass} now invokes $\mathsf{Compact}$
at the end of both arrival and departure branches, every event
terminates in a post-$\mathsf{Compact}$ state, and the equality
\eqref{eq:slot-saturation} therefore holds at the end of every event.
\end{proof}

\subsubsection*{Lemma~A.3 (Pool Pass Compactness)}
\label{lem:pool-pass}

After the pool pass of $\mathsf{Compact}$, at most one pool node is
non-fully-packed.

\begin{proof}
The pool pass instantiates $\mathsf{BestFitDrain}$ policy with
$\mathrm{count}_{\mathrm{pool}}(n) = |P(n)|$, $\mathrm{cap}_{\mathrm{pool}}(n) = G$,
and $\mathrm{room}_{\mathrm{pool}}(n) = G - |P(n)|$, and the set of partial
hosts is $\mathcal{N} = \{ n : 0 < |P(n)| < G \}$. Each iteration picks
the head of $\mathcal{N}$ as $n_{\mathrm{src}}$, the node with the largest
remaining room. The drain loop selects, for each 1-GPU item $x$ on
$n_{\mathrm{src}}$, the eligible destination with the smallest room
$\mathrm{room}_{\mathrm{pool}}(n_{\mathrm{dst}}) \ge 1$ and migrates $x$ there.
Each migration strictly increases $|P(n_{\mathrm{dst}})|$ and strictly
decreases $|P(n_{\mathrm{src}})|$.

The loop terminates either when $|\mathcal{N}| \le 1$ or when no progress
is made during an iteration. Suppose for contradiction that termination
occurs with $|\mathcal{N}| \ge 2$ and no progress in the last iteration.
``No progress'' means that for every $x$ on $n_{\mathrm{src}}$ the
eligible set $\mathcal{S} = \{ n \in \mathcal{N} \setminus \{n_{\mathrm{src}}\}
: \mathrm{room}_{\mathrm{pool}}(n) \ge 1 \}$ is empty. But $|\mathcal{N}|
\ge 2$ implies the existence of at least one $n' \in \mathcal{N}
\setminus \{n_{\mathrm{src}}\}$ with $|P(n')| < G$, equivalently
$\mathrm{room}_{\mathrm{pool}}(n') \ge 1$, so $n' \in \mathcal{S}$, a
contradiction. Therefore termination implies $|\mathcal{N}| \le 1$, which
is exactly the claim that at most one pool node is non-fully-packed.
\end{proof}

\subsubsection*{Lemma~A.4 (Slotted Compactness)}
\label{lem:slotted}

After the width-$4$ and width-$2$ passes of $\mathsf{Compact}$, and under
the slot saturation guaranteed by Lemma~A.2, at most one slotted node is
non-fully-packed.

\begin{proof}
Under Lemma~A.2 every active slot is internally saturated, so a slotted
node $n$ is non-fully-packed if and only if $D(n) = G - \sum_{s \in S(n)}
w(s) > 0$, i.e.\ the node has unprofiled residual capacity. Possible
values of $\sum_{s \in S(n)} w(s)$ are subsums of multisets drawn from
$\{2,4,8\}$ bounded above by $G = 8$, so a fully-packed slotted node
has $\sum_{s \in S(n)} w(s) = 8$ achieved by one of the multisets
$\{8\}$, $\{4,4\}$, $\{4,2,2\}$, $\{2,2,2,2\}$, while a non-fully-packed
slotted node has $\sum_{s \in S(n)} w(s) \le 6$ achieved by one of
$\{2\}$, $\{4\}$, $\{2,2\}$, $\{4,2\}$, $\{2,2,2\}$.

The width-$4$ pass treats anchored width-$4$ slots as items and slotted
nodes as hosts, with $\mathrm{count}_{\mathrm{w4}}(n)$ counting width-$4$ anchored slots in $S(n)$ and $\mathrm{room}_{\mathrm{w4}}(n) =
\lfloor D(n)/4 \rfloor$. Repeating the contradiction argument of
Lemma~A.3 with the width-$4$ class, if the width-$4$ pass terminates with
two or more nodes still satisfying $0 < \mathrm{count}_{\mathrm{w4}}(n)
< \mathrm{cap}_{\mathrm{w4}}(n)$ and with positive room on at least one
of them, then the head source has an eligible destination and the loop
must perform a migration, contradicting the no-progress termination
condition. The pass therefore terminates with at most one slotted node
in the width-$4$ partial set.

After the width-$4$ pass, consider the remaining partial slotted
nodes. Each such node either has no width-$4$ slot at all, or it is the
unique node left in the width-$4$ partial set. The first sub-population
is then processed by the width-$2$ pass under the same argument applied
to the width-$2$ class. The same no-progress contradiction shows that
the width-$2$ pass terminates with at most one node in the width-$2$
partial set.

It remains to argue that the survivor of the width-$4$ pass and the
survivor of the width-$2$ pass can be taken to be the same node. Suppose
the width-$4$ pass leaves a single partial node $n^{(4)}$ and the
width-$2$ pass leaves a single partial node $n^{(2)} \neq n^{(4)}$. The
node $n^{(4)}$ contains a single width-$4$ anchored slot and has $D(n^{(4)})
\in \{0, 2\}$ from the multiset enumeration above. If $D(n^{(4)}) = 0$
then $n^{(4)}$ is fully-packed, contradicting its membership in the
partial set; therefore $D(n^{(4)}) = 2$, which means $S(n^{(4)}) = \{4\}$,
because $\{4,2\}$ already exhausts the width-$2$ room and $\{4,4\}$ is
fully-packed. But then $\mathrm{room}_{\mathrm{w2}}(n^{(4)}) = \lfloor
2/2 \rfloor = 1 \ge 1$, so $n^{(4)}$ is a width-$2$-eligible destination
that can absorb any width-$2$ slot present on $n^{(2)}$. The width-$2$
pass would therefore have made a migration from $n^{(2)}$ to $n^{(4)}$,
contradicting the assumption that the width-$2$ pass left $n^{(2)} \neq
n^{(4)}$. We conclude that the survivor sets of the two passes coincide,
hence at most one slotted node remains partial after the entire
$\mathsf{Compact}$ call.
\end{proof}

\subsubsection*{Concluding the proof of Theorem~\ref{thm:compactness}}

Combining Lemmas~A.3 and~A.4, immediately after every $\mathsf{Compact}$
invocation performed by Algorithm~\ref{alg:compass} under the Workload
Composition Condition \eqref{eq:workload-cond-frac}, at most one pool
node and at most one slotted node are non-fully-packed, so at most two
non-empty nodes in the cluster carry unused GPU capacity. Because
Algorithm~\ref{alg:compass} invokes $\mathsf{Compact}$ in both the
arrival branch and the departure branch, every event handled by the
algorithm terminates in a state that satisfies this two-partial-node
bound. The cluster state changes only at event boundaries, so the bound
extends from every event boundary to every continuous time instant
between consecutive events; that is, the two-partial-node bound holds
for all $t \ge 0$.

It remains to translate this combinatorial bound into the
Scheduler-Induced Fragmentation metric of Section~\ref{sec:sif}.
Recall that $F^{\pi}(t)$ denotes the proportion of partial nodes in
the cluster state produced by policy $\pi$, that $F^{\star}(t) \ge 0$
denotes the inherent partial-node proportion attainable by any
feasible packing of the active demand $\mathcal{I}(t)$, and that
$\mathrm{SIF}^{\pi}(t) := F^{\pi}(t) - F^{\star}(t)$. Dividing the
partial-node count by the cluster size $N$ converts the
two-partial-node bound into $F^{\pi}(t) \le 2/N$ for all $t \ge 0$,
and the non-negativity of $F^{\star}(t)$ then gives
\[
\mathrm{SIF}^{\pi}(t) \;=\; F^{\pi}(t) - F^{\star}(t)
\;\le\; F^{\pi}(t) \;\le\; \frac{2}{N}
\qquad \forall\, t \ge 0.
\]
This completes the proof of Theorem~\ref{thm:compactness}. \qed

\subsection{Proof of Proposition~\ref{prop:complexity} (Operational Complexity)}
\label{app:subsec:complexity}

We bound the per-operator running time under the standard assumption
that the slot index, the pool index, and the empty-node index are
maintained as balanced search trees keyed by the relevant priority. The
bookkeeping cost of incremental index maintenance is absorbed into the
operator-level bounds below.

\paragraph{Place is $O(\log N)$.}
The multi-GPU branch performs three operations on the slotted-node index:
a candidate query for the smallest remaining capacity above the threshold
$w^\star(g_i)$, an insertion of a new slot record, and at most $w^\star(g_i)
- g_i \le 6$ filler-draw operations from the pool index. Each of these
operations runs in $O(\log N)$ on a balanced search tree of $N$ entries,
and the constant number of filler draws contributes only an $O(\log N)$
overhead. The 1-GPU branch performs one priority query for an open slot
vacancy or a non-full pool node, again at $O(\log N)$ cost. Therefore
$\mathcal{T}(\mathsf{Place}) = O(\log N)$.

\paragraph{Remove is $O(1)$.}
If $i$ is a 1-GPU occupant, the operator detaches $i$ from its current
container in $O(1)$ time by following the parent pointer maintained in
the slot or pool index. If $i$ is an anchor of slot $s$, the operator
dissolves $s$ in $O(1)$ time and reinserts each of the $|F(s)| \le 6$
fillers by the 1-GPU rule of $\mathsf{Place}$, each at $O(\log N)$ cost.
Since $|F(s)|$ is bounded by the constant $G - g_{a(s)} \le 6$, the
total work of an anchor departure is dominated by a constant number of
$O(\log N)$ filler reinsertions, so the local handler cost is $O(1)$
when the filler reinsertions are charged to the subsequent
$\mathsf{Compact}$ pass that already runs in $O(N)$. We therefore
report $\mathcal{T}(\mathsf{Remove}) = O(1)$ for the local edit.

\paragraph{Compact is $O(N)$.}
A single pass of $\mathsf{BestFitDrain}$ visits each partial host at
most once as $n_{\mathrm{src}}$ in the outer loop, and inside the loop it
performs at most $\mathrm{count}_{\mathcal{C}}(n_{\mathrm{src}}) \le G$
item migrations, each at $O(\log N)$ cost. The total work per class is
therefore $O(N \log N)$, and aggregating across the three classes
(pool, w4, w2) gives the same asymptotic bound. With a slightly tighter
amortised analysis that charges each migration to the destination node
whose room strictly decreases as a result, the $\log N$ factor collapses
to $O(N)$ total, matching the bound stated in the proposition. \qed

\section{Algorithm for Computing $\Phi^{\star}(t)$ and $F^{\star}(t)$}
\label{app:sec:algorithm-SIFOpt}

The inherent fragmentation defined in Section~\ref{sec:decomposing} is
\[
F^{\star}(t) \;=\; \frac{\Phi^{\star}(t)}{N}, \qquad
\Phi^{\star}(t) \;:=\; \min_{P \text{ feasible}} \nu(P),
\]
where $\nu(P)$ is the number of partial nodes in a feasible packing
$P$ of the instantaneous instance multiset $\mathcal{I}(t)$ and
$\Phi^{\star}(t)$ is the minimum partial-node count attainable across
all such packings. Throughout this appendix the algorithm computes
the integer-valued $\Phi^{\star}(t)$, from which $F^{\star}(t)$
follows by the single division by the cluster size $N$. Evaluating
$\Phi^{\star}(t)$ via a general bin-packing solver is already
expensive on a single time slice and prohibitive on a trace-long
evaluation. The structural analysis carried out in
Section~\ref{sec:method} for COMPASS-ABS, however, supplies enough
structure to compute $\Phi^{\star}(t)$ either in closed form or by a
small residual ILP. We describe the algorithm in this appendix and
then use it as the oracle scheduler \emph{SIFOpt} in the empirical
evaluation of Section~\ref{sec:experiments}.

\subsection{Key observation: reduing to power-of-two items}
\label{app:subsec:power-of-two-reduction}

Each fragmentation-unfriendly instance
$g \in \mathcal{I}^{\mathrm{FUF}}$ can be combined with strictly smaller
active instances into a \emph{composite block} whose total GPU count
falls in $\{4, 8\}$, namely
\begin{align*}
g = 3 :\quad & 3+1 = 4, \\
g = 5 :\quad & 5+3 = 8 \;\text{or}\; 5+2+1 = 8 \;\text{or}\; 5+1+1+1 = 8, \\
g = 6 :\quad & 6+2 = 8 \;\text{or}\; 6+1+1 = 8, \\
g = 7 :\quad & 7+1 = 8.
\end{align*}
If every active FUF instance is paired into such a composite, the
remaining item population consists of composites of size
$\{4, 8\}$, original $\mathcal{I}^{\mathrm{FF}}$ instances of size
$\{2, 4, 8\}$, and any 1-GPU instances not consumed as fillers, so the
multiset of item sizes is contained in $\{1, 2, 4, 8\}$. The
Workload Composition Condition \eqref{eq:workload-cond-frac} stated
in Theorem~\ref{thm:compactness} is exactly the algebraic condition
under which the active 1-GPU population is sufficient to complete this
pairing for every FUF instance, so under that condition the reduction
to a power-of-two item multiset always succeeds.

\subsection{Closed form solution under the Workload Composition Condition}
\label{app:subsec:closed-form}

When all item sizes lie in $\{1, 2, 4, 8\}$, packing into bins with
capacity $G = 8$ admits an trivial optimum: a Best-Fit-Decreasing pass that processes items in the order
$8 \to 4 \to 2 \to 1$ always closes one bin per multiple of $G$ and
leaves at most one partial bin whose load equals
$T(t) \bmod G$, where
\begin{equation}
T(t) \;:=\; \sum_{g=1}^{8} g \cdot n_g(t)
\label{eq:total-gpu-demand}
\end{equation}
denotes the total active GPU demand at time $t$. This is because every
two width-4 items pack into a single full bin, every four width-2
items pack into a single full bin, every eight width-1 items pack into
a single full bin, and any residual mixture below capacity $G$
collapses into a single tail bin by the nested-power-of-two structure.
The number of partial bins in the optimal packing is therefore
\begin{equation}
\Phi^{\star}(t) \;=\; \mathbb{1}\bigl[T(t) \not\equiv 0 \pmod{G}\bigr]
\;=\; \begin{cases} 0, & T(t) \bmod G = 0, \\ 1, & \text{otherwise,} \end{cases}
\label{eq:fstar-closed-form}
\end{equation}
$F^{\star}(t) = \Phi^{\star}(t)/N \in \{0,\, 1/N\}$. In other words,
under the condition the inherent fragmentation floor never exceeds a
single node: at most one bin carries unavoidable residue $T(t) \bmod G$,
and every partial node beyond that one is therefore scheduler-induced and
accounted for by $\mathrm{SIF}^{\pi}(t)$. Both cases of
\eqref{eq:fstar-closed-form} are tight and achievable by the explicit
Best-Fit-Decreasing placement described below.

\subsection{Residual ILP when the Workload Composition Condition fails}
\label{app:subsec:residual-ilp}

When \eqref{eq:workload-cond-frac} fails, the active 1-GPU population
is short by
$\delta(t) := \bigl(n_3 + 3 n_5 + 2 n_6 + n_7\bigr) - n_1$ relative to
the filler demand of the FUF anchors. After the greedy combining stage
exhausts the 1-GPU supply, exactly $\delta(t)$ units of FUF slot
vacancy remain unfilled, equivalently a small residual subset
$U(t) \subseteq \mathcal{I}^{\mathrm{FUF}}$ of FUF instances enter the
placement stage still carrying their original size in $\{3, 5, 6, 7\}$.
The item multiset is therefore no longer contained in
$\{1, 2, 4, 8\}$, and the closed form \eqref{eq:fstar-closed-form} no
longer applies.

In that case we restrict the optimization to $U(t)$: we first place
all power-of-two items via Best-Fit-Decreasing as in the closed-form
case, then run an integer linear program that places the residual FUF
instances on top of the resulting state and minimises the total
partial-node count. The size of this ILP is governed by $|U(t)|$
rather than $n_{\mathrm{total}}(t)$, and $|U(t)|$ is in turn bounded
by $\delta(t)$, which is empirically a small constant on production
traces because the Workload Composition Condition holds during the
overwhelming majority of evaluated time slices and fails only briefly
during transient mixture shifts.

\subsection{Algorithm summary}
\label{app:subsec:algorithm-summary}

We now state the procedure explicitly. The inputs are the active
instance counts $n_1(t), \ldots, n_8(t)$ at time $t$; the outputs are
the optimal partial-node count $\Phi^{\star}(t)$ and a witness packing
$P^{\star}(t)$ realising it (which the SIFOpt oracle scheduler uses
as its target placement). The normalised inherent fragmentation
$F^{\star}(t)$ used in Section~\ref{sec:decomposing} is then obtained
as $F^{\star}(t) = \Phi^{\star}(t)/N$.

\begin{algorithm}[b]
\framebox[\linewidth]{%
\begin{minipage}{0.93\linewidth}
\small
\noindent\textbf{Input:}\ active instance counts $n_g(t)$ for $g = 1, \ldots, 8$ \\
\textbf{Output:}\ $\Phi^{\star}(t) \in \mathbb{Z}_{\ge 0}$ together with a witness packing $P^{\star}(t)$ (and $F^{\star}(t) = \Phi^{\star}(t)/N$)

\medskip
\noindent 1:\ $T \leftarrow \sum_{g=1}^{8} g \cdot n_g(t)$ \\
\noindent 2:\ \textbf{if}\ $n_1(t) \ge n_3(t) + 3 n_5(t) + 2 n_6(t) + n_7(t)$\ \textbf{then}\ \emph{(WCC holds)} \\
\noindent 3:\ \ \ \textbf{for}\ each $i \in \mathcal{I}^{=5}(t)$\ \textbf{do}\ pair $i$ with one $\mathcal{I}^{=3}$ instance if available, else with $\mathcal{I}^{=2}+\mathcal{I}^{=1}$, else with three $\mathcal{I}^{=1}$ instances\ \emph{(form an 8-composite)} \\
\noindent 4:\ \ \ \textbf{for}\ each $i \in \mathcal{I}^{=6}(t)$\ \textbf{do}\ pair $i$ with one $\mathcal{I}^{=2}$ if available, else with two $\mathcal{I}^{=1}$\ \emph{(form an 8-composite)} \\
\noindent 5:\ \ \ \textbf{for}\ each $i \in \mathcal{I}^{=7}(t)$\ \textbf{do}\ pair $i$ with one $\mathcal{I}^{=1}$\ \emph{(form an 8-composite)} \\
\noindent 6:\ \ \ \textbf{for}\ each $i \in \mathcal{I}^{=3}(t)$ not consumed in line~3\ \textbf{do}\ pair $i$ with one $\mathcal{I}^{=1}$\ \emph{(form a 4-composite)} \\
\noindent 7:\ \ \ run Best-Fit-Decreasing over the resulting items (sizes in $\{1,2,4,8\}$) into bins of capacity $G$ to obtain $P^{\star}(t)$ \\
\noindent 8:\ \ \ \textbf{return}\ $\bigl(\mathbb{1}[T \bmod G \neq 0],\ P^{\star}(t)\bigr)$ \\
\noindent 9:\ \textbf{else}\ \emph{(WCC fails; small residual subset of FUF instances cannot pair)} \\
\noindent 10:\ \ \ greedily perform lines~3--6 until the 1-GPU pool is exhausted; collect unpaired FUF instances into $U(t)$ \\
\noindent 11:\ \ \ run Best-Fit-Decreasing over the paired items into a tentative packing $P_0$ \\
\noindent 12:\ \ \ solve the integer program
\begin{equation*}
\begin{aligned}
&\min_{x \in \mathbb{Z}_{\ge 0}^{|U(t)| \times M}} \; \nu(P_0 \oplus x) \\
&\text{s.t.}\ \; x \text{ places each instance in } U(t) \text{ into one bin of } P_0 \\
& \text{or into a new bin}
\end{aligned}
\end{equation*}
where $M$ is the bin index space and $\nu$ counts partial bins \\
\noindent 13:\ \ \ \textbf{return}\ $\bigl(\nu(P_0 \oplus x^{\star}),\ P_0 \oplus x^{\star}\bigr)$
\end{minipage}}
\caption{Computing $\Phi^{\star}(t)$ and the SIFOpt placement witness; $F^{\star}(t) = \Phi^{\star}(t)/N$.}
\label{alg:fstar}
\end{algorithm}

\paragraph{Logic of the procedure.}
The algorithm proceeds in two stages that mirror the structural
analysis of Section~\ref{sec:method}. The \emph{combining stage}
(lines~3--6) absorbs every fragmentation-unfriendly instance into a
composite block of size $\{4, 8\}$ by pairing it with strictly smaller
active instances. Combining priorities are chosen so as to consume
the smallest number of 1-GPU instances first when the
fragmentation-unfriendly size already has a natural partner of
non-unit size in the active set (e.g., $5+3$ before $5+1+1+1$); the
priority order does not affect $\Phi^{\star}(t)$ as long as combining
succeeds for all FUF instances, because the resulting power-of-two
item multiset has the same total $T(t)$ and Best-Fit-Decreasing
on a power-of-two item set always attains the bin-count lower bound
$\lceil T(t)/G \rceil$. The \emph{placement stage} (line~7) then
runs Best-Fit-Decreasing in the order $8 \to 4 \to 2 \to 1$ and
materialises this lower bound. Under the Workload Composition
Condition, the closed form \eqref{eq:fstar-closed-form} reads off the
partial-node count directly from $T(t) \bmod G$ without ever
inspecting the placement, but the placement is still produced as a
witness so that the SIFOpt oracle scheduler has a concrete target to
migrate towards in its evaluation runs. When the condition fails, the
residual ILP at lines~12--13 handles the small subset of FUF instances
that the combining stage could not pair, and the size of that ILP is
bounded by the deficit
$\delta(t) = (n_3 + 3 n_5 + 2 n_6 + n_7) - n_1$ rather than by the
cluster size, so the procedure remains tractable on production traces
even when \eqref{eq:workload-cond-frac} is momentarily violated.

\paragraph{Complexity.}
The combining stage scans the instance multiset once at $O(n_{\mathrm{total}}(t))$
cost. The Best-Fit-Decreasing placement runs in
$O(n_{\mathrm{total}}(t) \log n_{\mathrm{total}}(t))$ with a balanced
search tree keyed by bin remaining capacity, and degenerates to
$O(n_{\mathrm{total}}(t))$ for the power-of-two item set because each
item closes exactly one bin position. Whenever
\eqref{eq:workload-cond-frac} holds, the closed-form return at line~8
short-circuits the placement step for the purpose of reporting
$\Phi^{\star}(t)$, so the dominant cost is the linear-time combining and
the optional witness materialisation. When
\eqref{eq:workload-cond-frac} fails, the residual ILP at line~12 runs
on at most $\delta(t) \le |U(t)|$ variables and remains millisecond-scale
on production traces with $\delta(t)$ of the order of tens.

\section{Extension to Multi-Panel NVLink Communication Domains}
\label{app:sec:nvl}

The COMPASS-ABS design developed in Section~\ref{sec:method} hardwires
the per-node GPU count at $G = 8$ together with the slot widths
$\{2, 4, 8\}$, both of which trace back to the DGX-1 / HGX baseboard
abstraction in which a single node forms one NVLink communication
domain. Recent NVIDIA platforms such as NVL72 and the projected NVL576
extend the high-bandwidth NVLink fabric across multiple baseboards
inside a single rack-scale assembly, so that the basic intra-instance
communication domain now contains $G_{\mathrm{dom}}$ GPUs with
$G_{\mathrm{dom}}$ no longer equal to $8$. Throughout this appendix we
assume only that
\begin{equation}
G_{\mathrm{dom}} \;=\; G_p \cdot K, \qquad G_p \;=\; 8, \qquad K \in \mathbb{Z}_{\ge 1},
\label{eq:nvl-domain-size}
\end{equation}
namely that each NVLink communication domain comprises exactly $K$
baseboards of $G_p = 8$ GPUs each. The original DGX-1 / HGX system
corresponds to $K = 1$, NVL72 corresponds to $K = 9$, and a
hypothetical NVL576 corresponds to $K = 72$ if treated as a single
flat domain or $K = 8$ if treated as eight NVL72 sub-domains. The
structural fact we exploit is that \textbf{the $8$-GPU baseboard survives
across all of these platforms as the basic hardware building block};
only the number of baseboards that sit inside a single NVLink domain
varies.

We refer to a $G_p = 8$ baseboard as a \emph{panel} throughout this
appendix. Workloads on a multi-panel NVLink domain now admit two
qualitatively different instance classes:
\begin{itemize}
  \item \textbf{Sub-panel instances} with $g_i \in \{1, \ldots, 8\}$ that
        fit within one panel, exactly as in the original setting; and
  \item \textbf{Multi-panel instances} with $g_i \in \{16, 24, 32, 40, 48, 56,\allowbreak 64, \ldots, G_{\mathrm{dom}}\}$
        that span multiple panels in the same NVLink domain through the
        rack-scale NVLink fabric.
\end{itemize}
The within-panel intra-instance constraint of Section~\ref{sec:bg-physical-cluster}
is replaced by an intra-domain constraint: every GPU assigned to a
single instance must lie in the same NVLink domain. Sub-panel instances
satisfy this trivially; multi-panel instances exercise it.

\subsection{Hierarchical Anchor-Based Space}
\label{app:subsec:nvl-abs}

We extend COMPASS-ABS to a two-level scheduler that runs the
original within-panel policy on each panel and a structurally identical
panel-level analogue across panels within each NVLink domain. The
panel-level state space $\Sigma_{\mathrm{panel}}$ mirrors the ABS
construction of Section~\ref{sec:state-space} verbatim under two
substitutions: the panel replaces the GPU as the atomic placement unit,
and the domain capacity $K$ replaces the node capacity $G = 8$. A
multi-panel instance of demand $g_i$ has \emph{panel demand} $p_i :=
g_i / G_p \in \{2, 3, \ldots\}$ and is anchored in a
\emph{panel-slot} of width $w_p^\star(p_i) \in \{2, 4, 8\}$ panels:
\begin{equation}
w_p^\star(p) \;=\;
\begin{cases}
2, & p = 2, \\
4, & p \in \{3, 4\}, \\
8, & p \in \{5, 6, 7, 8\}.
\end{cases}
\label{eq:nvl-wstar}
\end{equation}
Multi-panel instances with $p_i \in \{2, 4, 8\}$ are called 
panel-fragmentation-friendly (panel-FF); those with $p_i \in
\{3, 5, 6, 7\}$ are panel-fragmentation-unfriendly (panel-FUF).
A 1-panel instance, namely a $g_i = G_p = 8$ instance that fully
occupies one baseboard, plays at the panel level the role that a
1-GPU instance plays at the GPU level inside a panel: it can either
saturate a panel-vacancy in an anchored panel-FUF panel-slot, or sit
in a \emph{panel-pool} reserved for panel-level fillers. The full
cluster-wide state space is the Cartesian product across NVLink
domains of $\Sigma_{\mathrm{panel}}$, with the internal state of each
non-saturated panel still drawn from the GPU-level $\Sigma$ of
Section~\ref{sec:state-space}. We call the resulting two-level state
space the \emph{hierarchical ABS} and denote it by
$\Sigma_{\mathrm{hier}}$.

\subsection{Hierarchical Workload Composition Condition}
\label{app:subsec:nvl-wcc}

The Workload Composition Condition \eqref{eq:workload-cond-frac} of
Theorem~\ref{thm:compactness} extends to the two-level setting by
imposing one condition per level. Let $n_g(t)$ denote the count of
active sub-panel instances of GPU demand $g \in \{1, \ldots, 8\}$ at
time $t$, and let $n_p^{(\mathrm{panel})}(t)$ denote the count of
active multi-panel instances of panel demand $p \in \{2, 3, \ldots\}$.
The GPU-level condition stays identical to
\eqref{eq:workload-cond-frac}, namely
\begin{equation}
n_1(t) \;\ge\; n_3(t) + 3 n_5(t) + 2 n_6(t) + n_7(t),
\label{eq:nvl-wcc-gpu}
\end{equation}
and the panel-level condition takes the structurally identical form
\begin{equation}
n_8(t) \;\ge\; n_3^{(\mathrm{panel})}(t) + 3 n_5^{(\mathrm{panel})}(t) + 2 n_6^{(\mathrm{panel})}(t) + n_7^{(\mathrm{panel})}(t),
\label{eq:nvl-wcc-panel}
\end{equation}
where the left-hand side is the active count of $g_i = 8$ instances,
treated as 1-panel fillers at the panel level. Equation
\eqref{eq:nvl-wcc-panel} is obtained from \eqref{eq:nvl-wcc-gpu} by
the substitution $1\text{-GPU} \mapsto 1\text{-panel}$ together with
the analogue of the slot-vacancy form
\eqref{eq:workload-cond-abs} at panel granularity, and uses the same
coefficient pattern $\{1, 3, 2, 1\}$ for the same algebraic reason as
in Lemma~A.1.

\subsection{Hierarchical COMPASS algorithm}
\label{app:subsec:nvl-compass}

The extended scheduler adopts two structurally identical
$\mathsf{Place}/\mathsf{Remove}/\mathsf{Compact}$ pipelines, one per
level, dispatched according to the demand class of the arriving or
departing instance:
\begin{itemize}
  \item \textbf{Sub-panel events} ($g_i \le 8$): handled by the
        GPU-level COMPASS-ABS of Section~\ref{sec:scheduling-policy}
        acting on the host panel.
  \item \textbf{Multi-panel events} ($g_i \in \{16, 24, \ldots, G_{\mathrm{dom}} - G_p\}$):
        handled by the panel-level analogue acting on the host NVLink
        domain, with $g_i = 8$ instances serving as panel-fillers.
  \item \textbf{Domain-filling events} ($g_i = G_{\mathrm{dom}}$):
        occupy an entire NVLink domain at once and are treated as a
        width-$K$ degenerate panel-slot.
\end{itemize}
Algorithm~\ref{alg:compass} is invoked at both levels and each
$\mathsf{Compact}$ pass runs over the BestFitDrain classes for its level. At the GPU level the three classes remain
$\{\mathrm{pool}, \mathrm{w}_4, \mathrm{w}_2\}$ on capacity $G_p = 8$;
at the panel level they become
$\{\mathrm{panel\text{-}pool}, \allowbreak \mathrm{panel\text{-}w_4}, \mathrm{panel\text{-}w_2}\}$
on capacity $K$. Because the two levels act on disjoint populations of
items, the two compaction passes do not interfere and run independently
after each event.

\subsection{Hierarchical compactness guarantee}
\label{app:subsec:nvl-thm}

The compactness guarantee of Theorem~\ref{thm:compactness} lifts to
the hierarchical setting in a structurally identical form.

\begin{theorem}[Hierarchical Compactness]
\label{thm:nvl-compactness}
Assume that the GPU-level Workload Composition Condition
\eqref{eq:nvl-wcc-gpu} and the panel-level Workload Composition
Condition \eqref{eq:nvl-wcc-panel} both hold at every $t \ge 0$. Then
in every cluster state produced by the hierarchical COMPASS-ABS
scheduler:
\begin{enumerate}
  \item across the entire cluster, at most one panel-resident slotted
        GPU-region and at most one panel-resident GPU-pool region are
        non-fully-packed; and
  \item within each NVLink domain that hosts at least one multi-panel
        anchor, at most one panel-slot region and at most one
        panel-pool region are non-fully-packed.
\end{enumerate}
Consequently, after normalising the partial-region count at each level
by the corresponding domain size, the GPU-level and panel-level
scheduler-induced fragmentation measures satisfy
\begin{equation}
\begin{aligned}
&\mathrm{SIF}^{\pi}_{\mathrm{GPU}}(t) \;\le\; \frac{2}{N_{\mathrm{p}}} \quad \text{(cluster-wide, $N_{\mathrm{p}}$ panels)}, \\
&\mathrm{SIF}^{\pi}_{\mathrm{panel}}(t) \;\le\; \frac{2}{K} \quad \text{(per NVLink domain of $K$ panels)},
\end{aligned}
\label{eq:nvl-sif-bound}
\end{equation}
for all $t \ge 0$.
\end{theorem}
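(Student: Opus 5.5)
The proof lifts the argument for Theorem~\ref{thm:compactness} level by level. The two pipelines act on disjoint item populations: sub-panel instances with $g_i\le 8$ at the GPU level, and multi-panel instances together with the $g_i=8$ panel-fillers at the panel level. So we first check that each level, viewed in isolation, is an exact copy of the original ABS under the substitution GPU $\mapsto$ panel and $G\mapsto K$. Then we invoke the single-level lemmas separately on each copy.

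Concretely, the steps are as follows.

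\textbf{Step 1 (invariance).} Apply Proposition~\ref{prop:invariance} at each level. The panel-level operators are syntactically the GPU-level ones with widths from \eqref{eq:nvl-wstar}, so the induction over operator branches transfers verbatim. Hence every produced state lies in $\Sigma_{\mathrm{hier}}$.

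\textbf{Step 2 (saturation).} Redo Lemma~A.1 at panel granularity. The vacancies $w_p^\star(p)-p$ equal $1,3,2,1$ for $p=3,5,6,7$ and $0$ for $p\in\{2,4,8\}$. Therefore \eqref{eq:nvl-wcc-panel} is exactly the panel slot-vacancy form. Lemma~A.2 then yields saturation of every panel-FUF panel-slot after each event, and \eqref{eq:nvl-wcc-gpu} yields the same inside panels.

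\textbf{Step 3 (per-level compaction).} Apply Lemmas~A.3 and~A.4 at the GPU level over all panels in the cluster, giving claim~1. Apply them at the panel level inside each domain, where the BestFitDrain passes run on hosts of capacity $K$, giving claim~2.

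\textbf{Step 4 (normalisation).} Normalise by $N_p$ and $K$ respectively, and use $F^\star\ge 0$ exactly as in the concluding paragraph of the proof of Theorem~\ref{thm:compactness}. This yields \eqref{eq:nvl-sif-bound}.

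The main obstacle is the coupling between the levels. A $g_i=8$ instance is simultaneously a fully packed panel at the GPU level and a filler at the panel level. Moreover, a panel's occupancy as seen by the panel level depends on whether the GPU level has fully saturated it. I would handle this by declaring, as the dispatch rule suggests, that the panel level only manipulates whole panels: empty panels, $8$-GPU filler panels, and panels inside multi-panel slots. Partially used panels containing sub-panel instances are then treated as opaque GPU-level hosts, and the GPU level never touches panels reserved by the panel level. With this separation the two state spaces are disjoint, and each Compact pass is blind to the other.

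The second obstacle is Lemma~A.4's multiset enumeration. That enumeration used $G=8$ to conclude that the width-$4$ survivor has $D=2$ and therefore absorbs any width-$2$ slot. For general $K$ the residuals $D(n)$ range more widely, and $K$ need not be a multiple of $8$ (e.g.\ $K=9$). Panel-slots of width $8$ then leave an odd residual that no panel-slot of width $\{2,4,8\}$ can use. I would first try to re-prove the survivor-merging step directly. The width-$4$ pass leaves at most one node with $D\ge 4$ among width-$4$ hosts, and any partial node with $D\ge 2$ is a width-$2$-eligible destination. Together these force the width-$2$ survivor to coincide with it. If that fails, I would count partial regions rather than nodes, and note that a panel-pool of $8$-GPU fillers can absorb odd residuals. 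This is the step I expect to need the most care.
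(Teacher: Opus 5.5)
\paragraph{Verdict: genuine gap at Step~3 (and hence Step~4).}
Your route is the same as the paper's. Its sketch also reduces the statement to two independent runs of the Lemma~A.1--A.4 chain under $G\to K$, 1-GPU $\to$ 1-panel, and simply asserts that Lemma~A.4 does not depend on $G$. You are right to distrust that assertion. But you leave it open, and the repair you propose fails.

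\paragraph{Your survivor-merging claim is false, already at capacity $8$.}
Your key claim is that any partial host with $D\ge 2$ is a width-2-eligible destination. For $\mathsf{Compact}$ as specified this is false. A host enters the width-2 partial set only if $\mathrm{count}_{\mathrm{w2}}(n)>0$, so a host carrying only width-4 or width-8 slots is never a destination. This is also the step the paper's own proof of Lemma~A.4 gets wrong.

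A concrete run at capacity $8$:
\begin{enumerate}
\item Jobs $Z$ (one 4-GPU instance), $X$ (two 4-GPU instances), then $Y$ and $W$ (one 2-GPU instance each) arrive in this order. Best-fit yields $S(A)=\{4_Z,4_X\}$ and $S(B)=\{4_X,2_Y,2_W\}$.
\item $X$ departs. The width-4 pass sees only $A$ and the width-2 pass sees only $B$, so both nodes stay partial.
\item $W$ departs, and a job with two 1-GPU instances arrives. It opens a pool node.
\end{enumerate}
There are now three partial nodes, although $\{4,2,1,1\}$ fits in one node, so $\Phi^\star=0$. Both Workload Composition Conditions hold throughout. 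So even the GPU-level appeal to Lemma~A.4 behind claim~1 is unavailable: the $2/N_{\mathrm{p}}$ bound fails for the operator as written.

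\paragraph{At the panel level the obstruction is worse.}
$\mathsf{Compact}$ has no width-8 class. For $K=16$, fill $m$ domains with two single-instance 64-GPU jobs each, then let one job per domain depart. Each domain keeps one width-8 panel-slot and 8 idle panels, and no pass ever merges them. Yet pairing them would leave at most one partial domain.

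For odd $K$ such as $9$, every panel-slotted domain has an odd panel-level residual. With only 64-GPU instances active, these residuals stay idle, so the number of non-fully-packed hosts grows with the workload (unavoidably, since $\Phi^\star$ grows too). The Step~4 shortcut of discarding $F^\star\ge 0$ therefore cannot give an $O(1/K)$ bound; you would have to compare with $\Phi^\star$ directly.

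\paragraph{Your fallbacks change what is being proved.}
Counting regions and letting a panel-pool absorb residuals changes both the claim and the state space. It puts slots and pool inside one host, which $\Sigma$ forbids, so Lemmas~A.3--A.4 cannot just be invoked. Your disjointness convention has a similar cost. A domain's panel-level capacity becomes $K$ minus the panels held by GPU-level hosts, so the panel level is not an exact copy with uniform capacity. In addition, $w_p^\star$ is undefined for $9\le p\le K-1$ once $K>9$.

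\paragraph{What is missing.}
Not a sharper argument, but a change to $\mathsf{Compact}$ or a restriction and reformulation of the claims. The change would be something like a width-8 drain, with width-$w$ passes allowed to target any slotted host with enough room.
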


\begin{proof}[Proof sketch.]
The argument reduces to two independent invocations of the proof of
Theorem~\ref{thm:compactness} given in
Appendix~\ref{app:subsec:compactness}, one on the GPU-level state
inside each panel and one on the panel-level state inside each NVLink
domain. Both invocations are valid because Lemmas~A.1--A.4 of
Appendix~\ref{app:subsec:compactness} depend only on
(i) the power-of-two slot widths $\{2, 4, 8\}$ being a valid
sub-decomposition of the local capacity,
(ii) the anchor-filler invariant under the friendly/unfriendly
dichotomy of \eqref{eq:nvl-wstar}, and
(iii) the class structure $\{\text{pool}, \text{w}_4, \text{w}_2\}$ on
which $\mathsf{BestFitDrain}$ operates, all of which are preserved
under the substitution $G \to K$ and $1\text{-GPU} \to 1\text{-panel}$
at the panel level.
\end{proof}

\subsection{Generality with respect to the domain panel count}
\label{app:subsec:nvl-general-K}

Theorem~\ref{thm:nvl-compactness} is stated for an arbitrary positive
integer $K$, without assuming that $K$ is a power of two. The
panel-level BestFitDrain operates on class set
$\{\mathrm{panel\text{-}pool}, \allowbreak \mathrm{panel\text{-}w_4}, \allowbreak \mathrm{panel\text{-}w_2}\}$
exactly as the GPU-level pass operates on its own class set, and the
combinatorial argument of Lemma~A.4 does not depend on $G$ taking any
particular value beyond requiring that the slot widths $\{2, 4, 8\}$
remain a valid sub-decomposition of the capacity. For the NVL72
instantiation with $K = 9$, any panel-level layout decomposes into at
most one width-$8$ panel-slot together with one residual panel acting
as a panel-pool, which is the structural reason the $9$-panel layout
fits cleanly into the existing framework despite $K = 9$ not being a
power of two: the lone residual panel coincides exactly with the
panel-pool slot that the framework already requires. The same
hierarchical scheduler and the same compactness bound apply to every
multi-panel platform whose NVLink domain comprises an integer number
of $8$-GPU baseboards, including the original DGX-1 / HGX system
($K = 1$, in which the panel level is trivially empty), DGX-2 and
SuperPod intermediate systems ($K = 2$), the NVL72 system ($K = 9$),
and the projected NVL576 system ($K = 72$ flat, or $K = 8$ across
NVL72 sub-domains as a third tier). In summary, as long as the NVLink
domain contains an integer multiple of $G_p = 8$ GPUs,
the hierarchical COMPASS-ABS scheduler remains valid and the SIF bound
\eqref{eq:nvl-sif-bound} continues to hold.